\pgfplotsset{compat=1.15}
\newcommand{\nullHyp}{\mathcal{W}}
\newcommand{\altHyp}{\overline{\mathcal{W}}} 
\newcommand{\Wbar}{\overline{W}}
\newcommand{\Ubar}{\overline{U}}
\newcommand{\scriptS}{\mathcal{S}}
\newcommand{\scriptSbar}{\bar{\mathcal{S}}}
\newcommand{\scriptX}{\mathcal{X}}
\newcommand{\scriptY}{\mathcal{Y}}
\newcommand{\domXY}{{\mathcal{X}}^n \times {\mathcal{Y}}^n}
\newcommand{\typeOneErrShared}{{\alpha}_n^{\textup{sh}}}
\newcommand{\typeTwoErrShared}{{\beta}_{n}^{\epsilon,\textup{sh}}}
\newcommand{\testShared}{f_{\textup{sh}}}
\newcommand{\advExpShared}{{\mathcal{E}}_{\textup{sh}}^{\epsilon}}
\newcommand{\advExpDet}{{\mathcal{E}}_{\textup{det}}^{\epsilon}}
\newcommand{\advExpPriv}{{\mathcal{E}}_{\textup{priv}}^{\epsilon}}
\newcommand{\advExp}{\mathcal{E}_{\textup{adv}}^{\epsilon}}
\newcommand{\eps}{\epsilon}
\newcommand{\Qshared}{Q_{\textup{sh}}}
\newcommand{\Qbarshared}{\bar{Q}_{\textup{sh}}}
\newcommand{\convHull}{\textup{conv}}
\newcommand{\bern}{\textup{Bern}}
\newcommand{\DOptSh}{D_{\textup{sh}}^{*}}
\newcommand{\DOptShBar}{\overline{D}_{\textup{sh}}^{*}}
\newcommand{\DOptDet}{D_{\textup{det}}^{*}}
\newcommand{\DOptDetBar}{\overline{D}_{\textup{det}}^{*}}
\newcommand{\DOptPriv}{D_{\textup{priv}}^{*}}
\newcommand{\DOptPrivBar}{\overline{D}_{\textup{priv}}^{*}}
\newcommand{\typeOneErrRest}{\tilde{\alpha}_{n}}
\newcommand{\typeTwoErrRest}{\tilde{\beta}_{n}}
\newcommand{\Prob}{\textup{Pr}}
\newcommand{\phiLarge}{\Phi_t}
\newcommand{\phiSmall}{\phi_t}
\newcommand{\phiStarShared}{\phi^{*}_{\textup{sh}}}
\newcommand{\Qtilt}{Q_{\textup{tilt}}}
\newcommand{\Qdet}{Q_{\textup{det}}}
\newcommand{\Qbardet}{\bar{Q}_{\textup{det}}}
\newcommand{\testDet}{f_{\textup{det}}}
\newcommand{\typeOneErrDet}{{\alpha}_n^{\textup{det}}}
\newcommand{\typeTwoErrDet}{{\beta}_{n}^{\epsilon,\textup{det}}}
\newcommand{\Qpriv}{Q_{\textup{priv}}}
\newcommand{\Qbarpriv}{\bar{Q}_{\textup{priv}}}
\newcommand{\testPriv}{f_{\textup{priv}}}
\newcommand{\typeOneErrPriv}{{\alpha}_n^{\textup{priv}}}
\newcommand{\typeTwoErrPriv}{{\beta}_{n}^{\epsilon,\textup{priv}}}
\newcommand{\sX}{\mathcal{X}}
\xdef\csname vec\x \endcsname{\noexpand\ensuremath{\noexpand\bm{\x}}}
\xdef\csname vec\x \endcsname{\noexpand\ensuremath{\noexpand\bm{\x}}}
\xdef\csname c\x \endcsname{\noexpand\ensuremath{\noexpand\mathcal{\x}}}
\xdef\csname bb\x \endcsname{\noexpand\ensuremath{\noexpand\mathbb{\x}}}
\newcommand{\defineqq}{\ensuremath{\stackrel{\textup{\tiny def}}{=}}}
\newcommand{\inb}[1]{\left\{#1\right\}}
\newcommand{\inp}[1]{\left(#1\right)}
\newcommand{\insq}[1]{\left[#1\right]}
\newcommand{\inps}[1]{\left|#1\right|^+}
\newcommand{\inl}[1]{\left|#1\right|}
\newcommand{\bec}[1]{\ensuremath{\mathsf{BEC}(#1)}}
\newcommand{\red}{\textcolor{red}}
\newcommand{\bpvtexp}{\ensuremath{\cE^{\epsilon}_{\textup{pvt}}(\nullHyp,\altHyp)}}
\theoremstyle{definition}
\newtheorem{example}{Example}
\newtheorem{theorem}{Theorem}
\newtheorem{corollary}{Corollary}
\newtheorem{lemma}{Lemma}
\newtheorem{remark}{Remark}
\newtheorem{definition}{Definition}
\def\BibTeX{{\rm B\kern-.05em{\sc i\kern-.025em b}\kern-.08em
    T\kern-.1667em\lower.7ex\hbox{E}\kern-.125emX}}
\begin{document}

\title{Hypothesis Testing for Adversarial Channels: Chernoff-Stein Exponents \thanks{This work was presented in part at the 2023 IEEE International Symposium on Information Theory.\\
E. Modak, N. Sangwan and V. M. Prabhakaran were supported by DAE under project no. RTI4001. N. Sangwan was additionally supported by the TCS Foundation through the TCS Research Scholar Program.  The work of M. Bakshi was supported by the National Science Foundation under Grant No. CCF-2107526. The work of B. K. Dey was supported in part by Bharti Centre for Communication in IIT Bombay. V. M. Prabhakaran was additionally supported by SERB through project MTR/2020/000308.}}

\author[1]{Eeshan Modak}
\author[2]{Neha Sangwan}
\author[3]{Mayank Bakshi}
\author[4]{Bikash Kumar Dey}
\author[1]{Vinod M. Prabhakaran}
\affil[1]{Tata Institute of Fundamental Research, Mumbai, India}
\affil[2]{University of California, San Diego, CA, USA}
\affil[3]{Arizona State University, Tempe, AZ, USA}
\affil[4]{Indian Institute of Technology Bombay, Mumbai, India}

\maketitle

\begin{abstract}

Consider the following binary hypothesis testing
problem: Associated with each hypothesis is a set of channels. A transmitter,
without knowledge of the hypothesis, chooses the inputs to the
channel. Given the hypothesis, from the set associated with the hypothesis, an
adversary chooses channels, one for each element of the input vector. Based on
the channel outputs, a detector attempts to distinguish between the hypotheses.
For the fixed-length setting, we study the Chernoff-Stein exponent for the cases where the transmitter (i) is deterministic, (ii) may privately randomize, and (iii) shares randomness with
the detector that is unavailable to the adversary. It turns out that while a
memoryless transmission strategy is optimal under shared randomness, it may be
strictly suboptimal when the transmitter only has private randomness. We also study the sequential version of this problem in each of the three settings and show that both the Chernoff-Stein exponents can be simultaneously achieved.
\end{abstract}


\section{Introduction}\label{sec:introduction}

We study the binary hypothesis testing problem for arbitrarily varying channels (AVC)~\cite{blackwell1960capacities}. Associated with each hypothesis is a set of channels. All channels have the same input and output alphabets. The transmitter, without knowledge of the hypothesis, chooses the vector of inputs to the channel. Given the hypothesis, the adversary chooses a vector of channels where each element belongs to the set of channels associated with the hypothesis. The detector observes the outputs resulting from applying the inputs chosen by the transmitter element-wise independently to the channels selected by the adversary. It then makes a decision on the hypothesis. The adversary is aware of the strategy of the transmitter and detector, but not necessarily the choice of channel inputs. 

In simple binary hypothesis testing~\cite{chernoff1952measure,hoeffding1965asymptotically} the goal is to distinguish between two distributions (sources), say $H_0: p$ and $H_1: q$ from $n$ independent and identically distributed (i.i.d.) observations from the source. The Chernoff-Stein lemma~\cite[Theorem~11.8.3]{cover1999elements} states that for a fixed false alarm (type-1 error) probability, the optimal missed detection (type-2 error) probability decays exponentially in $n$ with the exponent given by the relative entropy $D(p\|q)$ between the distributions. The test which achieves this exponent is a likelihood ratio test. When the detector is allowed to observe a variable number of samples, Wald and Wolfowitz \cite{wald1948optimum} showed that the pair of exponents $(D(q\|p),D(p\|q))$ can be simultaneously achieved by the sequential probability ratio test (SPRT) with appropriate thresholds.

A variation on this problem is where each observation is from an arbitrarily varying source~\cite{strassen1964me}. There is a set of distributions associated with each hypothesis, say $H_0: \mathcal{P}$ and $H_1: \mathcal{Q}$. Given a hypothesis, the observations are independent, but each observation could be arbitrarily distributed according to any one of the distributions belonging to the set of distributions corresponding to the hypothesis. We may view the choice of distribution as being made by an adversary who is aware of the detection scheme used. Fangwei and Shiyi~\cite{fangwei1996hypothesis} studied this problem where the adversary's choice may be stochastic but unaware of past observations. They showed that when the sets are closed and convex, the Chernoff-Stein exponent for this problem is given by $\min \limits_{p \in \mathcal{P},q \in \mathcal{Q}}D(p\|q)$. Brand\~{a}o, Harrow, Lee, and Peres~\cite{brandao2020adversarial} strengthened this result by showing that the above exponent remains unchanged even when the adversary is adaptive, i.e. it has feedback of the past observations and may use this to choose the distribution of the next observation. In both cases, the optimal test is a likelihood ratio test with respect to the closest pair of distributions between the two sets.

In another variation on the binary hypothesis testing problem, instead of distinguishing between sources, the objective is to distinguish between two channels (say $H_0: W$ and $H_1: \Wbar$) with the same input (say $\mathcal{X}$) and output alphabets (say $\mathcal{Y}$)~\cite{blahut1974hypothesis,hayashi2009discrimination}. Here, a transmitter, which is unaware of the hypothesis, may choose the inputs to the channels. It was shown that the optimal Chernoff-Stein error exponent can be attained using a deterministic transmission strategy, which sends the input letter for which the relative entropy between the channel output distributions under the two hypotheses is maximized (i.e. most discriminating symbol). The optimal exponent is given by $\max \limits_{x \in \mathcal{X}} D(W(.|x)\|\Wbar(.|x))$. Hayashi~\cite{hayashi2009discrimination} further showed that feedback does not improve the optimal error exponent in the adaptive case where the transmitter has feedback of the channel output. The optimal scheme is to send the most discriminating symbol during all channel uses and then performing a likelihood ratio test on the channel outputs. Polyanskiy and Verd\'u\cite{polyanskiy2011binary} considered the same problem with variable-length transmissions and showed that the pair of Chernoff-Stein exponents ($\max \limits_{x \in \mathcal{X}} D(\Wbar(.|x)\|W(.|x))$, $\max \limits_{x \in \mathcal{X}} D(W(.|x)\|\Wbar(.|x))$) can be simultaneously achieved. 

We consider the problem of distinguishing between two arbitrarily varying channels (say $H_0:\nullHyp$ and $H_1:\altHyp$). As in~\cite{hayashi2009discrimination}, a transmitter chooses the inputs to the channels. The sequence of channel states is (possibly randomly) chosen by an adversary who knows the strategy employed by the transmitter and the detector but not any shared or private randomness available to them. We first examine this problem in the fixed-length setting. We study three different cases based on the nature of randomness hidden from the adversary\footnote{We allow the adversary to randomize in all cases.}: (i) randomness shared between transmitter and detector (Section~\ref{sec:shared}), (ii) deterministic schemes (Section~\ref{sec:deterministic}), and (iii) private randomness at the transmitter (Section~\ref{sec:private}).  
We also comment on the role of adaptivity both of the transmitter and of the adversary (Section~\ref{sec:adaptive}).

\begin{table*}[]
	\centering
	\begin{tabular}{lcl} \toprule 
		& Chernoff-Stein exponent & Condition for the exponent to be non-zero \\
		\midrule
		\textup{Shared randomness} & $\sup \limits_{P_X} \min \limits_{U \in \convHull(\nullHyp), \Ubar \in \convHull(\altHyp)} D(U\|\Ubar|P_{X})$ & $\convHull(\nullHyp)\cap \convHull(\altHyp)=\emptyset$ \\
		\textup{Deterministic transmitter} & $\max \limits_{x} \min \limits_{U_x \in \convHull(\nullHyp_x), \Ubar_x \in \convHull(\altHyp_x)} D(U_x \| \Ubar_x)$ & $\convHull(\nullHyp_x)\cap \convHull(\altHyp_x)=\emptyset$ for some $x$ \\
		\textup{Private randomness} & Open (see Theorem~\ref{thm:pvt_rand}) & $\convHull(\nullHyp)\cap \convHull(\altHyp)=\emptyset$ and $(\nullHyp,\altHyp)$ is not trans-symmetrizable \\ \bottomrule
	\end{tabular}
	\label{tab:summary}
\end{table*}

In the case where randomness is shared, we show that the optimal Chernoff-Stein exponent is given by
\begin{equation*}
    \DOptSh := \sup \limits_{P_X} \min \limits_{\substack{U \in \convHull(\nullHyp) \\ \Ubar \in \convHull(\altHyp)}} D(U\|\Ubar|P_{X})
\end{equation*}
where $\convHull(\nullHyp)$ and $\convHull(\altHyp)$ are the convex hulls of the channel sets $\nullHyp$ and $\altHyp$ respectively. In contrast to \cite{hayashi2009discrimination}, randomness is necessary in general in this setting to achieve the optimal exponent. In line with their work, feedback (to the transmitter or adversary) does not change the optimal exponent. We observe that if the transmitter sends input symbols i.i.d. according to $P_X$, the problem reduces to detecting arbitrarily varying sources studied in \cite{fangwei1996hypothesis}, \cite{brandao2020adversarial}. The achievability of the exponent follows from this. The converse follows from the converse to the channel discrimination problem \cite{hayashi2009discrimination} by fixing an i.i.d. adversary strategy. While the conference version of this paper was under review, a work by Bergh, Datta and Salzmann \cite{bergh2023composite} that studies binary composite classical and quantum channel discrimination appeared. Their result in the context where the two hypotheses are convex sets of classical channels \cite[Theorem~13]{bergh2023composite} is identical to Theorem~\ref{thrm:shared_weak} (Section~\ref{sec:shared}).

In a similar vein, we show that the optimal exponent for the deterministic case is given by 
\begin{equation*}
    \DOptDet := \sup \limits_{x} \min \limits_{\substack{U_x \in \convHull(\nullHyp_x) \\ \Ubar_x \in \convHull(\altHyp_x)}} D(U_x\|\Ubar_x)
\end{equation*}
where $\convHull(\nullHyp_x)$ (resp. $\convHull(\altHyp_x)$) is the convex hull of the channel output distributions under $H_0$ (resp. $H_1$) when the input symbol is $x$ and $U_x(.)=U(.|x), \Ubar_x(.)=\Ubar(.|x)$. This holds true even when both the transmitter and the adversary have feedback. In both these cases, a memoryless transmission strategy turns out to be optimal. 

Interestingly, the optimality of the memoryless strategy does not extend to the private randomness case. In this case, the transmitter has randomness which is unknown to the adversary, but shares no randomness with the detector. A memoryless strategy can help us achieve $\sup_{P_{X}} \min_{\substack{Q_{Y} \in \mathcal{Q} \\ \bar{Q}_{Y} \in \bar{\mathcal{Q}}}} D(Q_{Y}\|\bar{Q}_{Y})$, where $\mathcal{Q}$ (resp. $\bar{\mathcal{Q}}$) is the set of (single-letter) channel output distributions that can be induced by the adversary when the input is distributed as $P_X$ under hypothesis $H_0$ (resp. $H_1$). We show that not only is this not the optimal exponent, this expression can evaluate to zero even when the optimal exponent is positive (Example~\ref{ex:role_of_memory}, Section~\ref{sec:private}). We characterize the conditions under which the exponent is positive. We also give a lower bound on the exponent using some ideas from codes for arbitrarily varying channels. Our model with private randomness is related to~\cite{chaudhuri2021compound} and is discussed in Section~\ref{sec:private} which considered communication rates and feasibility but not error exponents.

We also study the sequential version of this problem (Section~\ref{sec:seq_shared}). In this case, transmissions can be of variable length (with constraints on the expected length), and the detector's decision is based on a stopping rule. In each of the three settings of randomness, we show that the pair of optimal (fixed length) Chernoff-Stein exponents can be simultaneously achieved. These results are along the lines of \cite{wald1948optimum}, \cite{polyanskiy2011binary}. The achievability is based on a lemma which shows how to combine fixed length schemes to construct the desired sequential test (refer Lemma~\ref{lemma:seq_test}). Our scheme is along the lines of the two-phase sequential tests studied by Chernoff \cite{chernoff1959sequential}, Kiefer and Sacks \cite{kiefer1963asymptotically} and Naghshvar and Javidi \cite{naghshvar2013active}.

Our main contributions are the following.
\begin{itemize}
    \item We study the testing problem between two AVCs. We give an exact characterization of the Chernoff-Stein exponent for the shared randomness (Theorem~\ref{thrm:shared_weak}, Section~\ref{sec:shared}) and deterministic case (Theorem~\ref{thrm:det_weak}, Section~\ref{sec:deterministic}). For the private randomness case, we get an achievable exponent which in general can be sub-optimal (Theorem~\ref{thm:pvt_rand}, Section~\ref{sec:private})
    \item We observe that i.i.d. transmission strategies are optimal for the shared randomness case but are sub-optimal for the private randomness case in general as demonstrated in  Example~\ref{ex:role_of_memory}.
    \item We show that randomness helps to boost the exponent unlike the non-adversarial channel discrimination problem~\cite{hayashi2009discrimination}. As in the case of \cite{hayashi2009discrimination}, we observe that feedback does not help to increase the exponent in the shared randomness case.
    \item Finally, we also study the sequential version of the  problem, and show that both the Chernoff-Stein exponents of the fixed length problem can be simultaneously achieved in the sequential version (Theorems~\ref{thrm:shared_weak_seq}, \ref{thrm:det_weak_seq} and \ref{thrm:priv_weak_seq}, Section~\ref{sec:seq_shared}).
\end{itemize}

\section{Problem Setup} \label{sec:problem_setup}
Let $\scriptX$ and $\scriptY$ be finite sets. A discrete memoryless channel $W(.|.)$ takes an input symbol $x \in \mathcal{X}$ and outputs a symbol $y \in \mathcal{Y}$ with probability $W(y|x)$. Consider two finite sets of channels $\nullHyp = \{W(.|.,s):s \in \scriptS\}$, $\altHyp = \{\Wbar(.|.,\bar{s}):\bar{s} \in \scriptSbar\}$ which map $\mathcal{X}$ to $\mathcal{Y}$. The goal is to distinguish between the two sets of channels. In particular, we study the asymmetric hypothesis test between the null hypothesis $H_{0}:\nullHyp$ and the alternative hypothesis $H_{1}:\altHyp$. There are three entities involved: (a) the transmitter, (b) the adversary, and (c) the detector. The transmitter is unaware of which hypothesis has been realized and chooses the input symbols. The adversary, depending on which hypothesis is realized, chooses the state symbols (from $\scriptS$ under $H_0$ and $\scriptSbar$ under $H_1$). The detector decides between $H_0$ and $H_1$ based on everything it knows. We consider three different settings. In each of the settings, we seek to characterize the Chernoff-Stein exponent of the problem.

\subsection{Shared Randomness}
In this setting, the transmitter and detector share randomness which is unknown to the adversary. The input $X^n$ to the channel, which is a function of this randomness, is known to the detector. For a transmitter strategy $P_{X^n}$ and a pair of adversary strategies $P_{S^n}$ and $P_{\bar{S}^n}$, the distribution induced on $\domXY$ under $H_0$ is given by\footnote{For compactness of notation, in \eqref{eqn:q_shared} the dependence of $\Qshared^{n}$ on the transmission strategy $P_{X^n}$ and the adversary strategy $P_{S^n}$ is suppressed. And in \eqref{eqn:q_det} the dependence of $\Qdet^{n}$ on the transmission strategy $x^n$ and the adversary strategy $P_{S^n}$ is suppressed.\label{foot1}}
\begin{equation} \label{eqn:q_shared}
\Qshared^{n}(x^n,y^n) = \sum \limits_{s^n \in \scriptS^n} P_{X^n}(x^n)P_{S^n}(s^n)\prod \limits_{i=1}^{n}W(y_i|x_i,s_i).
\end{equation}
A similar expression is obtained for $\Qbarshared^{n}$ under $H_1$ where instead of $P_{S^n}$ and $W$ we have $P_{\bar{S}^n}$ and $\overline{W}$ respectively. 
The detector uses a (possibly privately randomized) decision rule $\testShared : \domXY \rightarrow \{ 0, 1\}$. Let $A_{n}$ be the (possibly random) acceptance region for $H_0$, i.e., $A_{n} = \{ (x^n,y^n) \in \domXY : \testShared(x^n,y^n) = 0 \}$. A scheme for the shared randomness case is given by a pair of transmission strategy and detection rule $(P_{X^n}, \testShared)$. For a given scheme, the type-I error is given by 
\begin{equation*}
\typeOneErrShared = \sup \limits_{P_{S^n}} \mathbb{E} \left [ \Qshared^{n}(A_{n}^c) \right ],
\end{equation*}
where the expectation is over the random choice of $A_{n}$.
For $\eps>0$, when the type-I error $\typeOneErrShared$ is at most $\eps$, the optimal type-II error is given by
\begin{equation*}
\typeTwoErrShared \defineqq \inf \limits_{P_{X^n}} \inf \limits_{A_{n}:\typeOneErrShared \le \epsilon} \sup \limits_{P_{\bar{S}^n}} \mathbb{E} \left[ \Qbarshared^{n}(A_{n}) \right],
\end{equation*}
where the expectation is over the random $A_n$ set by the inner $\inf$.
The Chernoff-Stein exponent is then defined to be
\begin{equation*}
\advExpShared(\nullHyp,\altHyp) \defineqq \liminf  \limits_{n \rightarrow \infty} -\frac{1}{n} \log \typeTwoErrShared, \quad\eps > 0.
\end{equation*}

\subsection{Deterministic}
In this setting, the transmitter strategy is completely deterministic and is defined by a fixed tuple $(x_1,x_2,\dots,x_n)$. For this transmission strategy and an adversary strategy $P_{S^n}$, the distribution on $\scriptY^n$ under $H_0$ is given by${}^{\thefootnote}$
\begin{equation} \label{eqn:q_det}
\Qdet^{n}(y^n) = \sum \limits_{s^n \in \scriptS^n}P_{S^n}(s^n)\prod \limits_{i=1}^{n}W(y_i|x_i,s_i).
\end{equation}
A similar expression is obtained for $\Qbardet^{n}$ under $H_1$ where instead of $P_{S^n}$ and $W$ we have $P_{\bar{S}^n}$ and $\overline{W}$ respectively. The decision rule used by the detector is specified by $\testDet : \mathcal{Y}^n \rightarrow \{ 0, 1\}$. Let $A_{n}$ be the (possibly random) acceptance region for $H_0$, i.e., $A_{n} = \{ y^n \in \mathcal{Y}^n : \testDet(y^n) = 0 \}$. A scheme for the deterministic case is given by a pair of transmission strategy and detection rule $(x^n, \testDet)$. For a given scheme, the type-I error is given by 
\begin{equation*}
\typeOneErrDet = \sup \limits_{P_{S^n}} \mathbb{E} \left [ \Qdet^{n}(A_{n}^c) \right],
\end{equation*}
where the expectation is over the random choice of $A_{n}$. For $\eps>0$, when the type-I error $\typeOneErrDet$ is at most $\eps$, the optimal type-II error is given by
\begin{equation*}
\typeTwoErrDet \defineqq \inf \limits_{x^n} \inf \limits_{A_{n}:\typeOneErrDet \le \epsilon} \sup \limits_{P_{\bar{S}^n}} \mathbb{E} \left[ \Qbardet^{n}(A_{n}) \right],
\end{equation*}
where the expectation is over the random $A_n$ set by the inner $\inf$. The Chernoff-Stein exponent is then defined to be
\begin{equation*}
\advExpDet(\nullHyp,\altHyp) \defineqq \liminf  \limits_{n \rightarrow \infty} -\frac{1}{n} \log \typeTwoErrDet, \quad\eps > 0.
\end{equation*}

\subsection{Private Randomness}
We finally consider the case where the transmitter may choose the channel input $X^n$ randomly, but the realization of $X^n$ is unavailable to the detector and the adversary. For a transmitter strategy $P_{X^n}$ and an adversary strategy $P_{S^n}$, the distribution induced on $\mathcal{Y}^n$ under $H_0$ is given by \footnote{For compactness of notation, in \eqref{eqn:q_priv} the dependence of $\Qpriv^{n}$ on the transmission strategy $P_{X^n}$ and the adversary strategy $P_{S^n}$ is suppressed.}
\begin{equation} \label{eqn:q_priv}
\Qpriv^{n}(y^n) = \sum_{\substack{x^n \in \mathcal{X}^n \\ s^n \in \scriptS^n}} P_{X^n}(x^n)P_{S^n}(s^n)\prod \limits_{i=1}^{n}W(y_i|x_i,s_i).
\end{equation}
A similar expression is obtained for $\Qbarpriv^{n}$ under $H_1$ where instead of $P_{S^n}$ and $W$ we have $P_{\bar{S}^n}$ and $\overline{W}$ respectively. The decision rule used by the detector is specified by $\testPriv : \mathcal{Y}^n \rightarrow \{ 0, 1\}$. Let $A_{n}$ be the (possibly random) acceptance region for $H_0$, i.e., $A_{n} = \{ y^n \in \mathcal{Y}^n : \testDet(y^n) = 0 \}$. A scheme for the private randomness case is given by a pair of transmission strategy and detection rule $(P_{X^n}, \testPriv)$. For a given scheme, the type-I error is given by 
\begin{equation*}
\typeOneErrPriv = \sup \limits_{P_{S^n}} \mathbb{E} \left [ \Qpriv^{n}(A_{n}^c) \right],
\end{equation*}
where the expectation is over the random choice of $A_{n}$. For $\eps>0$, when the type-I error $\typeOneErrPriv$ is at most $\eps$, the optimal type-II error is given by
\begin{equation*}
\typeTwoErrPriv \defineqq \inf \limits_{P_{X^n}} \inf \limits_{A_{n}:\typeOneErrPriv \le \epsilon} \sup \limits_{P_{\bar{S}^n}} \mathbb{E} \left[ \Qbarpriv^{n}(A_{n}) \right],
\end{equation*}
where the expectation is over the random $A_n$ set by the inner $\inf$. The Chernoff-Stein exponent is then defined to be
\begin{equation*}
\advExpPriv(\nullHyp,\altHyp) \defineqq \liminf  \limits_{n \rightarrow \infty} -\frac{1}{n} \log \typeTwoErrPriv, \quad\eps > 0.
\end{equation*}

We also study the sequential versions of the above problems. We discuss it separately in Section~\ref{sec:seq_shared}. We now present the results for each of the above setting.

\begin{figure}
    \resizebox{0.95\columnwidth}{!}{

\begin{tikzpicture}[x=0.75pt,y=0.75pt,yscale=-1,xscale=1]

\draw   (158.5,142.37) -- (227.44,142.37) -- (227.44,168) -- (158.5,168) -- cycle ;
\draw   (426.99,141.32) -- (478.5,141.32) -- (478.5,168.53) -- (426.99,168.53) -- cycle ;
\draw   (294.2,231.47) -- (353.5,231.47) -- (353.5,257) -- (294.2,257) -- cycle ;
\draw   (286.5,137) .. controls (286.5,130.37) and (291.87,125) .. (298.5,125) -- (350.5,125) .. controls (357.13,125) and (362.5,130.37) .. (362.5,137) -- (362.5,173) .. controls (362.5,179.63) and (357.13,185) .. (350.5,185) -- (298.5,185) .. controls (291.87,185) and (286.5,179.63) .. (286.5,173) -- cycle ;
\draw    (226.5,156) -- (282.76,156.28) ;
\draw [shift={(284.76,156.29)}, rotate = 180.28] [color={rgb, 255:red, 0; green, 0; blue, 0 }  ][line width=0.75]    (10.93,-3.29) .. controls (6.95,-1.4) and (3.31,-0.3) .. (0,0) .. controls (3.31,0.3) and (6.95,1.4) .. (10.93,3.29)   ;
\draw    (363.2,154.61) -- (424.5,154.99) ;
\draw [shift={(426.5,155)}, rotate = 180.35] [color={rgb, 255:red, 0; green, 0; blue, 0 }  ][line width=0.75]    (10.93,-3.29) .. controls (6.95,-1.4) and (3.31,-0.3) .. (0,0) .. controls (3.31,0.3) and (6.95,1.4) .. (10.93,3.29)   ;
\draw    (323.27,231.47) -- (323.27,187.22) ;
\draw [shift={(323.27,185.22)}, rotate = 90] [color={rgb, 255:red, 0; green, 0; blue, 0 }  ][line width=0.75]    (10.93,-3.29) .. controls (6.95,-1.4) and (3.31,-0.3) .. (0,0) .. controls (3.31,0.3) and (6.95,1.4) .. (10.93,3.29)   ;

\draw (161.29,150.51) node [anchor=north west][inner sep=0.75pt]  [font=\footnotesize,rotate=-359.91] [align=left] {Transmitter};
\draw (428.61,150.09) node [anchor=north west][inner sep=0.75pt]  [font=\footnotesize] [align=left] {Detector};
\draw (295.9,240.24) node [anchor=north west][inner sep=0.75pt]  [font=\footnotesize] [align=left] {Adversary};
\draw (297.83,128.12) node [anchor=north west][inner sep=0.75pt]  [font=\normalsize]  {$H_{0} :\ \mathcal{W} $};
\draw (298.43,159.85) node [anchor=north west][inner sep=0.75pt]  [font=\normalsize]  {$H_{1} :\overline{\mathcal{W}}$};
\draw (315.03,145.41) node [anchor=north west][inner sep=0.75pt]   [align=left] {or};
\draw (245,131.4) node [anchor=north west][inner sep=0.75pt]    {$x^{n}$};
\draw (382,129.4) node [anchor=north west][inner sep=0.75pt]    {$y^{n}$};
\draw (333,201.4) node [anchor=north west][inner sep=0.75pt]    {$s^{n} ,\overline{s}^{n}$};

\end{tikzpicture}
}
    \caption{Each hypothesis is a AVC controlled by an adversary. The transmitter sends a vector of inputs $x^n$. The adversary sends a vector of states ($s^n$ under $H_0$ and $\bar{s}^n$ under $H_1$). The detector observes the vector of outputs $y^n$.}
    \label{fig:fig2}
\end{figure}

\section{Shared Randomness} \label{sec:shared}
Let $\convHull(\nullHyp)$ and $\convHull(\altHyp)$ be the convex hulls of the channel sets $\nullHyp$ and $\altHyp$ respectively. i.e.,
\begin{equation*}
\convHull(\nullHyp) \defineqq \inb{\sum \limits_{s \in \scriptS}P_{S}(s)W(.|.,s) : P_{S} \in \Delta_{\scriptS}},
\end{equation*}
where $\Delta_{\scriptS}$ is the set of all probability distributions over $\scriptS$.
$\convHull(\altHyp)$ is defined similarly with $\bar{S},\Wbar$ instead of $S,W$.
Let
\begin{equation} \label{eqn:DoptShDef}
\DOptSh \defineqq \sup \limits_{P_X} \min_{\substack{U \in \convHull(\nullHyp) \\ \Ubar \in \convHull(\altHyp)}} D(U\|\Ubar|P_{X}).
\end{equation}
Since $\convHull(\nullHyp)$, $\convHull(\altHyp)$ are closed, convex sets and $D(.\|.)$ is lower semi-continuous, the minimum exists.

\begin{theorem}[] \label{thrm:shared_weak}
	Let $\nullHyp$ and $\altHyp$ be two sets of discrete memoryless channels which map $\mathcal{X}$ to $\mathcal{Y}$. For any $\eps \in (0,1)$, we have
	\begin{equation} \label{eqn:shared_weak}
	\DOptSh \le \advExpShared(\nullHyp,\altHyp) \le \frac{\DOptSh}{1-\eps}.
	\end{equation}
\end{theorem}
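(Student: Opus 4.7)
The plan is to prove the two inequalities separately: achievability via the adversarial Chernoff--Stein lemma (Theorem~\ref{thrm:bhlp1}), and the converse via the standard weak-converse data-processing argument. For achievability I would regard the detector's effective $i$th observation as $Z_i:=(X_i,Y_i)$ on $\mathcal{Z}:=\mathcal{X}\times\mathcal{Y}$: let $P_X^\star$ nearly attain the sup in $\DOptSh$ and have the transmitter draw $X^n$ i.i.d.\ from $P_X^\star$ through the shared randomness, so that $X^n$ is also known to the detector. Define the closed convex sets $\mathcal{P}^\star:=\{P_X^\star\cdot W:W\in\convHull(\nullHyp)\}$ and $\mathcal{Q}^\star:=\{P_X^\star\cdot\Wbar:\Wbar\in\convHull(\altHyp)\}$. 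The key claim is that any admissible $P_{S^n}$ gives a valid adaptive adversary for $(Z_i)$ in the sense of Theorem~\ref{thrm:bhlp1}: since $X_i$ is independent of $(X^{i-1},S^n)$, one has $P(S_i=\cdot\mid Z^{i-1},X_i)=P(S_i=\cdot\mid Z^{i-1})$, so the conditional distribution of $Z_i$ given $Z^{i-1}$ takes the form $P_X^\star(x_i)\tilde W(y_i\mid x_i)$ with $\tilde W:=\sum_{s}P(S_i{=}s\mid Z^{i-1})W(\cdot|\cdot,s)\in\convHull(\nullHyp)$, and thus lies in $\mathcal{P}^\star$; analogously for $\mathcal{Q}^\star$ under $H_1$. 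Theorem~\ref{thrm:bhlp1} then supplies a test with type-I error at most $\eps$ and type-II exponent $\min_{p\in\mathcal{P}^\star,q\in\mathcal{Q}^\star}D(p\|q)=\min_{W,\Wbar}D(W\|\Wbar\mid P_X^\star)$, which approaches $\DOptSh$ as $P_X^\star$ approaches the sup.

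For the converse, fix a (WLOG deterministic) acceptance region $A_n\subseteq\mathcal{X}^n\times\mathcal{Y}^n$ and transmitter strategy $P_{X^n}$ attaining $\typeTwoErrShared$ with $\typeOneErrShared\le\eps$, and let the adversary play i.i.d.\ $P_{S^n}=P_S^{\otimes n}$, $P_{\bar S^n}=P_{\bar S}^{\otimes n}$ inducing memoryless effective channels $\tilde W\in\convHull(\nullHyp)$, $\tilde{\Wbar}\in\convHull(\altHyp)$. A direct computation gives $D(\Qshared^n\|\Qbarshared^n)=n\,D(\tilde W\|\tilde{\Wbar}\mid \bar P_X)$ where $\bar P_X:=\frac{1}{n}\sum_{i=1}^n P_{X_i}$ is the average marginal of $X^n$, so the adversary's optimal choice of $(\tilde W,\tilde{\Wbar})$ forces $D(\Qshared^n\|\Qbarshared^n)\le n\DOptSh$. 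Data processing applied to the indicator of $A_n$ yields $D(\Qshared^n\|\Qbarshared^n)\ge d(\Qshared^n(A_n),\Qbarshared^n(A_n))$ where $d$ is the binary KL divergence; using $\Qshared^n(A_n^c)\le\eps$ and $\Qbarshared^n(A_n)\le\typeTwoErrShared$, a routine expansion of the binary KL gives $(1-\eps)\log(1/\typeTwoErrShared)\le n\DOptSh+O(1)$. Dividing by $n$ and taking $\liminf$ completes the converse.

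The subtlest step is the achievability reduction: although the adversary is non-adaptive, its joint $P_{S^n}$ induces a posterior on $S_i$ given past $Z^{i-1}$ that behaves like an adaptive choice, and crucially this posterior-averaged channel lies in $\convHull(\nullHyp)$, which is exactly the structural property needed to invoke Theorem~\ref{thrm:bhlp1}.
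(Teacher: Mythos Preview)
Your proof is correct. The achievability part is essentially the paper's argument: i.i.d.\ transmission under $P_X$, reduction to Theorem~\ref{thrm:bhlp1} with $\mathcal{P}=\{P_XW:W\in\convHull(\nullHyp)\}$ and $\mathcal{Q}=\{P_X\Wbar:\Wbar\in\convHull(\altHyp)\}$; your observation that a non-adaptive $P_{S^n}$ induces, via the posterior of $S_i$ given $Z^{i-1}$, an \emph{adaptive} adversary drawing from $\convHull(\nullHyp)$ is exactly the bridge the paper leaves implicit.

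The converse is where you diverge slightly. The paper proves a stronger statement (converse even for an \emph{adaptive} transmitter with output feedback) and therefore constructs the adversary recursively: a memoryless but time-varying choice $(P_{S_i},P_{\bar S_i})$ picked after $Q_{X_i}$ is determined, so that each single-letter term $D(Q_{Y_i|X_i}\|\bar Q_{Y_i|X_i}|Q_{X_i})\le\DOptSh$. You instead keep the adversary strictly i.i.d.\ and exploit the linearity of $P_X\mapsto D(\tilde W\|\tilde{\Wbar}|P_X)$ to collapse $\sum_i D(\tilde W\|\tilde{\Wbar}|P_{X_i})=n\,D(\tilde W\|\tilde{\Wbar}|\bar P_X)$, then let the adversary optimize against the single distribution $\bar P_X$. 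This is simpler and fully sufficient for the theorem as stated (non-adaptive transmitter), but does not extend to the adaptive-transmitter strengthening the paper obtains as a byproduct; conversely, the paper's recursive construction is needed precisely because with feedback there is no fixed $\bar P_X$ known in advance. One minor wrinkle: the ``WLOG deterministic $A_n$'' is not actually needed (and not quite justifiable in this adversarial setting), but since the data-processing step applies verbatim to a randomized test, this does not affect the argument.
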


\begin{proof}
	
	{\em Achievability ($\advExpShared(\nullHyp,\altHyp) \ge \DOptSh $):} For this proof, we consider the case where $W(y|x)>0, \Wbar(y|x)>0$ for all $x \in \mathcal{X}, y \in \mathcal{Y}$ for each channel $W \in \nullHyp$, $\Wbar \in \altHyp$. If this assumption is not satisfied, it can be dealt with using the idea in \cite[Lemma 3]{brandao2020adversarial}. It involves discarding the actual observations with a small probability and instead sampling from a uniform distribution on $\mathcal{Y}$. The compactness of the set of probability distributions on $\mathcal{Y}$ and lower semi-coninuity of KL divergence implies that the Chernoff-Stein exponent of the modified problem approaches that of the original problem. We argue the achievability for the (stronger) adaptive adversary who has access to previous channel inputs and outputs.
	The transmitter transmits $X^n$ chosen i.i.d. according to $P_X$ using the shared randomness. This reduces the problem to the adversarial hypothesis testing problem studied in \cite{brandao2020adversarial}. For any fixed choice of $P_X$, invoking \cite[Theorem 2]{brandao2020adversarial} (refer Appendix~\ref{app:prelim}) with $\mathcal{P}=\{ P_X U: U \in \convHull(\nullHyp) \}$ and $\mathcal{Q} = \{ P_X \Ubar: \Ubar \in \convHull(\altHyp) \}$,
	\begin{equation*}
	\advExpShared(\nullHyp,\altHyp) \ge \min_{\substack{U \in \convHull(\nullHyp) \\ \Ubar \in \convHull(\altHyp)}} D(U\|\Ubar|P_{X}).
	\end{equation*}
	Optimizing over $P_X$ completes the proof of achievability.
	
	{\em Weak Converse ($\advExpShared(\nullHyp,\altHyp) \le \frac{\DOptSh}{1-\eps}$):}
    We show this converse result for an adaptive transmitter who has feedback of the outputs. Fix the following adversarial strategy: i.i.d. $P_S$ under $H_0$ and i.i.d. $P_{\bar{S}}$ under $H_1$. Let $U \in \convHull(\nullHyp), \Ubar \in \convHull(\altHyp)$ be the induced effective channels, i.e $U(y|x)=\sum_{s \in \mathcal{S} } P_S(s)W(y|x,s)$ and $\Ubar(y|x) = \sum_{\bar{s} \in \bar{\mathcal{S}} } P_{\bar{S}}(\bar{s})\Wbar(y|x,\bar{s})$. This reduces the problem to the one studied in \cite[Section VI]{hayashi2009discrimination}. We now invoke their weak converse argument.
    \begin{align*}
        \advExpShared(\nullHyp,\altHyp) &\le \frac{\max_{x} D(U(.|x) \| \Ubar(.|x))}{1-\eps} \\
        &= \frac{\sup_{P_X} D(U\|\Ubar|P_X)}{1-\eps}.
    \end{align*}
    We now choose the best adversarial strategy. Thus, we have
    \begin{equation*}
        \advExpShared(\nullHyp,\altHyp) \le \frac{\min \limits_{U,\Ubar} \sup \limits_{P_X} D(U\|\Ubar|P_X)}{1-\eps}
    \end{equation*}
    Using \cite[Lemma 14]{brandao2020adversarial}, we can change the order of $\min$ and $\sup$. This completes the converse argument. 
\end{proof}

An approach of choosing a memoryless (not necessarily i.i.d.) adversary strategy also allows us to use the proof technique of~ \cite[Section VI]{hayashi2009discrimination},\cite{nagaoka2005strong} to obtain the following strong converse (see Appendix~\ref{app:shared_strong} for a proof).
For  distributions $\mu_{XY}, \nu_{XY}$ on $\scriptX \times \scriptY$ and $t \in \mathbb{R}$, let 
\begin{align*}
\phiSmall(\mu_{X} \| \nu_{X}) & \defineqq \log \left[ \sum_{\scriptX}\mu_{X}^{1-t}\nu_{X}^{t} \right] \\
\phiSmall(\mu_{Y|X} \| \nu_{Y|X} | \mu_{X}) &\defineqq \log \mathop{\mathbb{E}}_{X \sim \mu_{X}} \left[ \sum_{\scriptY}\mu_{Y|X}^{1-t}\nu_{Y|X}^{t} \right].
\end{align*}
\begin{theorem}\label{thrm:shared_strong}
	If 
	\begin{align}
	    \lim\limits_{t \rightarrow 0^{-}} \sup \limits_{P_X} \inf_{\substack{U \in \convHull(\nullHyp) \\ \Ubar \in \convHull(\altHyp)}} \frac{\phi_{t}(U \| \Ubar | P_{X})}{-t} = \sup \limits_{P_X} \inf_{\substack{U \in \convHull(\nullHyp) \\ \Ubar \in \convHull(\altHyp)}} \lim\limits_{t \rightarrow 0^{-}} \frac{\phi_{t}(U \| \Ubar | P_{X})}{-t}, \label{eqn:shared_conv_assumption}
	\end{align}
	then
	\begin{equation}
	\advExpShared(\nullHyp,\altHyp) = \DOptSh.
	\end{equation}
\end{theorem}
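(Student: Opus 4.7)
My strategy extends the Nagaoka--Hayashi R\'enyi-divergence strong converse to this adversarial setting: (i) use a minimax argument to produce a single \emph{stationary} memoryless adversary whose per-letter $\phi_t$-cost is uniformly good against every transmitter conditional distribution, (ii) telescope $\phi_t(Q^n\|\bar{Q}^n)$ over the $n$ channel uses, and (iii) plug into the standard R\'enyi-divergence strong converse inequality to conclude.

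The first step uses Sion's minimax theorem applied to $E_t(W,\bar{W},P_X)\defineqq \exp(\phi_t(W\|\bar{W}|P_X)) = \sum_{x}P_X(x)\sum_{y}W(y|x)^{1-t}\bar{W}(y|x)^{t}$. This is linear in $P_X\in\Delta_{\scriptX}$, and for $t<0$ the summand $(u,v)\mapsto u^{1-t}v^t$ is jointly convex on $(0,\infty)^2$ (its Hessian is positive semidefinite with vanishing determinant, since $(1-t)+t=1$), so $E_t$ is jointly convex in $(W,\bar{W})\in\convHull(\nullHyp)\times\convHull(\altHyp)$. As the latter is compact, Sion's theorem yields $\sup_{P_X}\inf_{W,\bar{W}} E_t = \inf_{W,\bar{W}}\sup_{P_X} E_t$, and monotonicity of $\log$ transfers the equality to $\phi_t/(-t)$. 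Combined with the hypothesis \eqref{eqn:shared_conv_assumption}, for every $\delta>0$ I can pick $t^*<0$ sufficiently close to $0$ and $(W^*,\bar{W}^*)\in\convHull(\nullHyp)\times\convHull(\altHyp)$ with
\begin{equation*}
\sup_{P_X}\frac{\phi_{t^*}(W^*\|\bar{W}^*|P_X)}{-t^*}\le \DOptSh+\delta.
\end{equation*}

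Having fixed $(W^*,\bar{W}^*)$, the adversary plays the stationary memoryless strategy realising it at every time. For any (adaptive, feedback-using) transmitter $\{\vec{Q}_i\}$, the induced $Q^n,\bar{Q}^n$ factor as in \eqref{eqn:shared_conv_q}--\eqref{eqn:shared_conv_qbar} with the per-time output kernels $W^*$ and $\bar{W}^*$. Unrolling $\exp(\phi_{t^*}(Q^n\|\bar{Q}^n))$ from $i=n$ down, the innermost sum over $(x_i,y_i)$ equals $\exp(\phi_{t^*}(W^*\|\bar{W}^*|\vec{Q}_i(\cdot|x^{i-1},y^{i-1})))$ because $\vec{Q}_i$ is independent of $y_i$; the displayed bound makes this factor at most $\exp((-t^*)(\DOptSh+\delta))$ uniformly in the history. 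Iterating yields $\phi_{t^*}(Q^n\|\bar{Q}^n)\le n(-t^*)(\DOptSh+\delta)$. The standard Hayashi--Nagaoka inequality (reverse H\"older gives $Q^n(A)^{1-t^*}\bar{Q}^n(A)^{t^*}\le\exp(\phi_{t^*}(Q^n\|\bar{Q}^n))$ for deterministic $A$, and Jensen applied to the jointly convex $(u,v)\mapsto u^{1-t^*}v^{t^*}$ extends this to randomized tests) combined with $\mathbb{E}[Q^n(A_n)]\ge 1-\eps$ then gives
\begin{equation*}
\mathbb{E}[\bar{Q}^n(A_n)]\ge (1-\eps)^{(1-t^*)/(-t^*)}\exp\!\left(-\phi_{t^*}(Q^n\|\bar{Q}^n)/(-t^*)\right).
\end{equation*}
Since $\typeTwoErrShared$ is a supremum over adversary strategies and ours is one valid choice, plugging in the telescoped bound yields $-\log\typeTwoErrShared/n\le \DOptSh+\delta+O(1/n)$; letting $n\to\infty$ and then $\delta\to 0$ completes the proof.

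The main obstacle is the minimax step. Without the joint convexity of $E_t$ and the resulting Sion interchange, a stationary adversary can only guarantee the sup-inf quantity $\sup_{P_X}\inf_{W,\bar{W}}\phi_t/(-t)$, whereas the telescoping needs a single $(W^*,\bar{W}^*)$ that is simultaneously good against every per-coordinate transmitter distribution; the two coincide only through the minimax identity. The hypothesis \eqref{eqn:shared_conv_assumption} is what ties this finite-$t$ minimax to the limit $\DOptSh$ as $t\to 0^-$.
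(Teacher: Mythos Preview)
Your proof is correct, but it takes a genuinely different route from the paper's. The paper does \emph{not} use a stationary adversary or Sion's theorem. Instead, it constructs a non-i.i.d.\ memoryless adversary \emph{recursively}: at step $i$ it defines the tilted distribution $Q_{\textup{tilt}}^{i-1}\propto (Q^{i-1})^{1-t}(\bar{Q}^{i-1})^{t}$, lets $\tilde{Q}_{X_i}$ be the marginal of $Q_{\textup{tilt}}^{i-1}\vec{Q}_i$ on $X_i$, and then picks $(P_{S_i},P_{\bar{S}_i})$ to minimize $\phi_t(Q_{Y_i|X_i}\|\bar{Q}_{Y_i|X_i}\mid \tilde{Q}_{X_i})$. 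With this choice the paper obtains the \emph{exact} additive decomposition $\phi_t(Q^n\|\bar{Q}^n)=\sum_{i}\phi_t(Q_{Y_i|X_i}\|\bar{Q}_{Y_i|X_i}\mid \tilde{Q}_{X_i})$ and bounds each summand by $\phiStarShared(t)=\sup_{P_X}\min_{W,\bar{W}}\phi_t(W\|\bar{W}\mid P_X)$; no minimax interchange is needed because the adversary is allowed a fresh choice at each $i$.

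Your approach trades this recursive construction for Sion's minimax theorem: the joint convexity of $E_t$ in $(W,\bar{W})$ (your positive-semidefinite Hessian calculation) gives $\sup_{P_X}\inf_{W,\bar{W}}E_t=\inf_{W,\bar{W}}\sup_{P_X}E_t$, and compactness then furnishes a \emph{single} stationary pair $(W^*,\bar{W}^*)$ whose $E_{t^*}$ is uniformly bounded against every $P_X$. Your telescoping is then an inequality at each step rather than the paper's exact identity. What you gain is a conceptually cleaner i.i.d.\ adversary independent of the transmitter's strategy; what you pay is the extra minimax machinery and the convexity verification. Both routes land at $\phi_{t}(Q^n\|\bar{Q}^n)\le n\phiStarShared(t)$ and then finish identically via R\'enyi data processing and the hypothesis~\eqref{eqn:shared_conv_assumption}.
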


The following theorem characterizes the pairs of $(\nullHyp,\altHyp)$ for which $\advExpShared>0$.
\begin{theorem}\label{thm:shared_characterization}
$\advExpShared(\nullHyp,\altHyp)>0 \iff \convHull(\nullHyp)\cap \convHull(\altHyp)=\emptyset$.
\end{theorem}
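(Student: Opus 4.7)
The plan is to handle the two directions separately, relying on Theorem~\ref{thrm:shared_weak} for sufficiency and a direct mixing-adversary argument for necessity.

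For sufficiency ($\Leftarrow$), I will show that disjointness of the convex hulls forces $\DOptSh>0$; the lower bound in Theorem~\ref{thrm:shared_weak} then yields $\advExpShared(\nullHyp,\altHyp)\ge \DOptSh>0$. Take $P_X$ to be the uniform distribution on $\scriptX$. Then
\[
D(W\|\Wbar|P_X)=\frac{1}{|\scriptX|}\sum_{x\in\scriptX}D\!\left(W(\cdot|x)\,\|\,\Wbar(\cdot|x)\right)
\]
vanishes if and only if $W(\cdot|x)=\Wbar(\cdot|x)$ for every $x\in\scriptX$, i.e., $W=\Wbar$ as channels. By the disjointness assumption this cannot happen for $W\in\convHull(\nullHyp)$ and $\Wbar\in\convHull(\altHyp)$, so $D(W\|\Wbar|P_X)>0$ pointwise on $\convHull(\nullHyp)\times\convHull(\altHyp)$. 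Since $\nullHyp$ and $\altHyp$ are finite, these convex hulls are compact polytopes; combined with the lower semi-continuity of KL divergence, this ensures that the minimum is attained and strictly positive.

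For necessity ($\Rightarrow$), I argue the contrapositive. Assume $\convHull(\nullHyp)\cap\convHull(\altHyp)\ne\emptyset$ and fix $P_S^{*}\in\Delta_{\scriptS}$, $P_{\bar S}^{*}\in\Delta_{\scriptSbar}$ such that
\[
\sum_{s\in\scriptS}P_S^{*}(s)W(\cdot|\cdot,s)=\sum_{\bar s\in\scriptSbar}P_{\bar S}^{*}(\bar s)\Wbar(\cdot|\cdot,\bar s)=:W^{*}.
\]
Let the adversary use the product strategies $P_{S^n}(s^n)=\prod_{i=1}^{n}P_S^{*}(s_i)$ and $P_{\bar S^n}(\bar s^n)=\prod_{i=1}^{n}P_{\bar S}^{*}(\bar s_i)$. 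Substituting into~\eqref{eqn:q_shared} and its analogue, for every transmitter strategy $P_{X^n}$ both joint laws reduce to $P_{X^n}(x^n)\prod_{i=1}^{n}W^{*}(y_i|x_i)$; in particular $\Qshared^n=\Qbarshared^n$. Consequently, for any (possibly random) acceptance region $A_n$,
\[
\mathbb{E}\!\left[\Qshared^n(A_n^c)\right]+\mathbb{E}\!\left[\Qbarshared^n(A_n)\right]=1,
\]
so every test with $\typeOneErrShared\le\eps$ must satisfy $\typeTwoErrShared\ge 1-\eps$ under this particular adversarial choice (which is a valid lower bound because the sup over $P_{\bar S^n}$ in the definition of $\typeTwoErrShared$ is at least the value attained at our product strategy). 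Hence $-\tfrac{1}{n}\log\typeTwoErrShared\to 0$ and $\advExpShared(\nullHyp,\altHyp)=0$.

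The only nontrivial step is the sufficiency argument: passing from pointwise positivity of $D(W\|\Wbar|P_X)$ to strict positivity of the minimum relies on compactness of the convex hulls (guaranteed because $\nullHyp$ and $\altHyp$ are finite sets of channels) together with the lower semi-continuity of KL divergence. The necessity direction is essentially bookkeeping on top of the observation that an adversary able to force the two induced output distributions to coincide trivially defeats every test.
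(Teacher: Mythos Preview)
Your proof is correct and follows essentially the same approach as the paper: the necessity direction via an i.i.d.\ mixing adversary that collapses the two hypotheses to the same channel is exactly the paper's argument, and your sufficiency direction is just a fleshed-out version of the paper's one-line appeal to Theorem~\ref{thrm:shared_weak}, making explicit (via the uniform $P_X$ and compactness/lower semi-continuity) the step $\convHull(\nullHyp)\cap\convHull(\altHyp)=\emptyset \Rightarrow \DOptSh>0$ that the paper leaves implicit.
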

\begin{proof}
The  if ($\Leftarrow$) part follows from Theorem~\ref{thrm:shared_weak}. To see the (contrapositive of the) only if ($\Rightarrow$) direction, notice that under hypothesis $H_0$ (resp., $H_1$), the adversary may induce any channel $\convHull(\nullHyp)$ (resp., $\convHull(\altHyp)$) from the transmitter to the detector. Hence, when the intersection is non-empty, the adversary may induce the same channel under both hypotheses so that no transmission strategy (including an adaptive one) can distinguish between the hypotheses.    
\end{proof}

\section{Deterministic Transmitter} \label{sec:deterministic}
For $x \in \scriptX$, let $\convHull(\nullHyp_x)$ and $\convHull(\altHyp_x)$ be the convex hulls of the conditional distributions $W(.|x,s)$ and $\Wbar(.|x,\bar{s})$.
\begin{equation*}
\convHull(\nullHyp_x) \defineqq \inb{\sum \limits_{s \in \scriptS}P_{S}(s)W(.|x,s) : P_{S} \in \Delta_{\scriptS}},
\end{equation*}
$\convHull(\altHyp_x)$ is defined similarly with $\bar{S},\Wbar$ instead of $S,W$.
Define $\DOptDet$ to be
\begin{equation} \label{eqn:DoptDetDef}
\DOptDet := \max \limits_{x} \min_{\substack{U_x \in \convHull(\nullHyp_x) \\ \Ubar_x \in \convHull(\altHyp_x)}} D(U_x \| \Ubar_x),
\end{equation}
where $U_x(.) = U(.|x), \Ubar_x(.)=\Ubar(.|x)$.
\begin{theorem}[] \label{thrm:det_weak}
	Let $\nullHyp$ and $\altHyp$ be two sets of discrete memoryless channels which map $\mathcal{X}$ to $\mathcal{Y}$. For any $\eps \in (0,1)$, we have
	\begin{equation} \label{eqn:det_weak}
	\DOptDet \le \advExpDet(\nullHyp,\altHyp) \le \frac{\DOptDet}{1-\eps}.
	\end{equation}
    If 
	\begin{align}
	    \lim\limits_{t \rightarrow 0^{-}} \max \limits_{x} \inf_{\substack{U_x \in \convHull(\nullHyp_x) \\ \Ubar_x \in \convHull(\altHyp_x)}} \frac{\phi_{t}(U_x \| \Ubar_x)}{-t} = \max \limits_{x} \inf_{\substack{U_x \in \convHull(\nullHyp_x) \\ \Ubar_x \in \convHull(\altHyp_x)}} \lim\limits_{t \rightarrow 0^{-}} \frac{\phi_{t}(U_x \| \Ubar_x)}{-t}, \label{eqn:det_conv_assumption}
	\end{align}
	then
	\begin{equation}
	\advExpDet(\nullHyp,\altHyp) = \DOptDet.
	\end{equation}
    Furthermore,
    $\advExpDet(\nullHyp,\altHyp)>0 \iff \convHull(\nullHyp_x) \cap \convHull(\altHyp_x)=\emptyset$ for some $x$.
\end{theorem}

The proof (Appendix~\ref{app:det_weak_nofb}) is on similar lines as Theorems~\ref{thrm:shared_weak}, \ref{thrm:shared_strong}, \ref{thm:shared_characterization}. We also show that \eqref{eqn:det_weak} holds when both the transmitter and the adversary are adaptive (Appendix~\ref{app:det_weak_fb}). 

\section{Private Randomness}\label{sec:private}

We now consider the case where the transmitter may choose the channel input $X^n$ randomly, but the realization of $X^n$ is unavailable to the detector and the adversary. By the discussion in the proof of achievability in Theorem~\ref{thrm:shared_weak}, if the transmitter adopts an i.i.d. $P_X$ strategy, the best possible exponent (irrespective of whether the adversary is adaptive or not) is 
\begin{align*}
D_{\textup{pvt,iid}}=\sup_{P_{X}} \min_{\substack{Q_{Y} \in \mathcal{Q} \\ \bar{Q}_{Y} \in \bar{\mathcal{Q}}}} D(Q_{Y}\|\bar{Q}_{Y}),
\end{align*} 
where $\mathcal{Q}$ (resp. $\bar{\mathcal{Q}}$) is the set of (single-letter) channel output distributions that can be induced by the adversary  under hypothesis $H_0$ (resp. $H_1$) when the input is distributed as $P_X$, i.e.,
$\mathcal{Q} \defineqq \inb{\sum_{x,s} P_{S}(s)P_{X}(x)W(\cdot|x,s) : P_{S} \in \Delta_{\scriptS}}$.
It turns out that in general the optimal exponent $\bpvtexp$ could be strictly larger that $D_{\textup{pvt,iid}}$. In the following example, $\bpvtexp>0$ for all $\eps>0$ even though $D_{\textup{pvt,iid}}=0$.

\definecolor{xdxdff}{rgb}{0.49019607843137253,0.49019607843137253,1}
\definecolor{ududff}{rgb}{0.30196078431372547,0.30196078431372547,1}
\begin{figure}[!h]
	\resizebox{1\columnwidth}{!}{
	\begin{tikzpicture}[line cap=round,line join=round,>=triangle 45,x=1cm,y=1cm]
	\clip(-6.5,-5.06) rectangle (12.87,5.42);
	\draw [line width=1pt] (-6,2) -- (-2,2);
	\draw [line width=1pt] (-6,-2) -- (-2,-2);
	\draw [line width=1pt] (-6,-2) -- (-2,0);
	\draw [line width=1pt] (-6,2) -- (-2,0);
	\draw [line width=1pt] (2-0.5,2) -- (6-0.5,2);
	\draw [line width=1pt] (2-0.5,-2) -- (6-0.5,-2);
	\draw [line width=1pt] (2-0.5,2) -- (6-0.5,0);
	\draw [line width=1pt] (2-0.5,-2) -- (6-0.5,0);
	\draw [line width=1pt] (2-0.5,2) -- (6-0.5,-2);
	\draw [line width=1pt] (8,2) -- (12,2);
	\draw [line width=1pt] (8,-2) -- (12,-2);
	\draw [line width=1pt] (8,2) -- (12,0);
	\draw [line width=1pt] (8,-2) -- (12,0);
	\draw [line width=1pt] (8,-2) -- (12,2);

	\draw (-6.60,2.25) node[anchor=north west] {\LARGE $0$};
	\draw (-6.60,-1.75) node[anchor=north west] {\LARGE $1$};
	\draw (-1.83,2.25) node[anchor=north west] {\LARGE $0$};
	\draw (-1.85,0.2) node[anchor=north west] {\LARGE $e$};
	\draw (-1.87,-1.75) node[anchor=north west] {\LARGE $1$};
	
	\draw (-4.49,2.9) node[anchor=north west]  {\LARGE $1-p$};
	\draw (-4.47,-2.15) node[anchor=north west] {\LARGE $1-p$};
	\draw (-4.25,0.85) node[anchor=north west] {\LARGE $p$};
	\draw (-4.25,-0.40) node[anchor=north west] {\LARGE $p$};
	
	\draw (1.45-0.5,2.25) node[anchor=north west] {\LARGE $0$};
	\draw (1.45-0.5,-1.75) node[anchor=north west] {\LARGE $1$};
	\draw (6.15-0.5,2.25) node[anchor=north west] {\LARGE $0$};
	\draw (6.15-0.5,0.2) node[anchor=north west] {\LARGE $e$};
	\draw (6.15-0.5,-1.75) node[anchor=north west] {\LARGE $1$};
	
	\draw (2.2-0.5,2.9) node[anchor=north west] {\LARGE $(1-p)(1-r)$};
	\draw (4-0.5,1.60) node[anchor=north west] {\LARGE $p$};
	\draw (1.7-0.5,0.3) node[anchor=north west] {\LARGE $(1-p)r$};
	\draw (3.87-0.5,-1.10) node[anchor=north west] {\LARGE $p$};
	\draw (3.51-0.5,-2.15) node[anchor=north west] {\LARGE $1-p$};
	
	\draw (7.45,2.25) node[anchor=north west] {\LARGE $0$};
	\draw (7.45,-1.75) node[anchor=north west] {\LARGE $1$};
	\draw (12.15,2.25) node[anchor=north west] {\LARGE $0$};
	\draw (12.15,0.2) node[anchor=north west] {\LARGE $e$};
	\draw (12.11,-1.75) node[anchor=north west] {\LARGE $1$};
	
	\draw (9.53,2.90) node[anchor=north west] {\LARGE $1-p$};
	\draw (10,1.60) node[anchor=north west] {\LARGE $p$};
	\draw (7.75,0.45) node[anchor=north west] {\LARGE $(1-p)r$};
	\draw (9.89,-1.10) node[anchor=north west] {\LARGE $p$};
	\draw (8.1,-2.15) node[anchor=north west] {\LARGE $(1-p)(1-r)$};

	\draw (-5.0,-3.7) node[anchor=north west] {\huge $W(\cdot|\cdot)$};
	\draw (8.5,-3.7) node[anchor=north west] {\huge $\overline{W}(\cdot|\cdot,1)$};
	\draw (2,-3.7) node[anchor=north west] {\huge $\overline{W}(\cdot|\cdot,0)$};
	\draw [line width=1pt] (0-0.25,3)-- (0-0.25,-4);
	\begin{scriptsize}
	\draw [fill=ududff] (-6,2) circle (2.5pt);
	\draw [fill=ududff] (-2,2) circle (2.5pt);
	\draw [fill=ududff] (-6,-2) circle (2.5pt);
	\draw [fill=ududff] (-2,-2) circle (2.5pt);
	\draw [fill=xdxdff] (-2,0) circle (2.5pt);
	\draw [fill=ududff] (2-0.5,2) circle (2.5pt);
	\draw [fill=ududff] (6-0.5,2) circle (2.5pt);
	\draw [fill=ududff] (2-0.5,-2) circle (2.5pt);
	\draw [fill=ududff] (6-0.5,-2) circle (2.5pt);
	\draw [fill=xdxdff] (6-0.5,0) circle (2.5pt);
	\draw [fill=ududff] (8,2) circle (2.5pt);
	\draw [fill=ududff] (12,2) circle (2.5pt);
	\draw [fill=ududff] (8,-2) circle (2.5pt);
	\draw [fill=ududff] (12,-2) circle (2.5pt);
	\draw [fill=xdxdff] (12,0) circle (2.5pt);
	\end{scriptsize}
	\end{tikzpicture}
	}
	\caption{Example~\ref{ex:role_of_memory} considers two sets of channels $\nullHyp= \{ W(\cdot|\cdot) \}$ and $\altHyp=\{\Wbar(\cdot|\cdot,0),\Wbar(\cdot|\cdot,1)\}$ which cannot be distinguished using i.i.d. transmission schemes when the  transmitter is restricted to be privately randomized. However, a simple scheme with memory yields a positive Chernoff-Stein exponent.}
	\label{fig:fig1}
\end{figure}

\begin{example}[Figure~\ref{fig:fig1}] \label{ex:role_of_memory}
We define two sets of channels for the  alphabets $\scriptX = \{0,1\},\,\scriptY=\{0,1,e\},\, \scriptS =\{ 0 \} \text{ and } \scriptSbar = \{0,1\}$. The hypothesis $H_0:\nullHyp= \{ {W(\cdot|\cdot)} \}$ consists of a binary erasure channel with parameter $p<1$ (\bec{p}). The hypothesis $H_1$ consists of $\altHyp=\{\Wbar(\cdot|\cdot,0),\Wbar(\cdot|\cdot,1)\}$ where 
for any $\bar{s} \in \{0,1\}$, the channel  $\Wbar(\cdot|x,\bar{s})$ is defined as
\begin{align*}
\Wbar(e|x,\bar{s}) &= p \text{ and }\\
\Wbar(x|x,\bar{s}) &= \begin{cases}
    (1-p)(1-r) &\text{ if }\bar{s}=x,\\
    1-p &\text{ otherwise.}
\end{cases}
\end{align*} where $x\in \{0,1\}, r \in (0,1)$.
The channels $\Wbar(\cdot|\cdot,0)$ and $\Wbar(\cdot|\cdot,1)$ can be thought of as  modified \bec{p} channels where one of the symbols flips with probability $(1-p)r$ as shown in Figure~\ref{fig:fig1}.

Note that $\mathcal{Q}$ is a singleton, so there are no adversarial attacks. For any input distribution $P_{X}(0) = q$ and $P_X(1) = 1-q$, the induced output distribution is given by $P_Y(0)=\sum_xP_X(x)W(0|x) = q(1-p)$ and $P_Y(e)=\sum_xP_X(x)W(e|x) = p$.
On the other hand, under $H_1$, suppose the adversary sets $P_{\bar{S}}(0) = 1-P_{X}(0)$. Then, the induced channel output distribution is given by 
\begin{align*}
 P_Y(e)=\sum_{\bar{s}, x}&P_X(x)P_{\bar{S}}\Wbar(e|x,\bar{s})= p\text{ and }\\
P_Y(0)=\sum_{\bar{s}, x}&P_X(x)P_{\bar{S}}\Wbar(0|x,s)= q(1-q)\Wbar(0|0,0)+ q^2\Wbar(0|0,1) + (1-q)^2\Wbar(0|1,0) + (1-q)q\Wbar(0|1,1)\\
& = q(1-q)(1-p)(1-r)+ q^2(1-p)+ 0 + (1-q)q(1-p)r\\
& = q(1-p).
 \end{align*} 
This is the  same as the one under $H_0$. Hence, $\mathcal{Q} \subset \bar{\mathcal{Q}}$ and therefore $D_{\textup{pvt,iid}} = 0$.

	Now to see that $\bpvtexp>0$, consider a transmission scheme with 2-step memory: $n/2$ i.i.d. pairs are sent where each pair is distributed as $P_{X_1,X_2}(0,0) = P_{X_1,X_2}(1,1) = 0.5$. The effective channel is now a random map from $\scriptX^2$ to $\scriptY^2$. The new state space for the (non-adaptive) adversary under $H_0$ is $\scriptS^2$ (which is still a singleton), and $\scriptSbar^2$ under $H_1$. Let $\mathcal{Q}_2$ (resp. $\bar{\mathcal{Q}}_2$) be the set of (two-letter) channel output distributions that can be induced by the adversary when the input is distributed according to $P_{X_1,X_2}$ under $H_0$ (resp. $H_1$). Since $\mathcal{Q}_2$ is a singleton, let the member be denoted by $Q_{Y_1,Y_2}$. If we show that $Q_{Y_1,Y_2} \notin \bar{\mathcal{Q}}_2$, we may conclude that $\bpvtexp > 0$. Assume for contradiction that this is not the case, i.e., suppose there exists $P_{\bar{S}_1,\bar{S}_2}$ such that the resulting $\bar{Q}_{Y_1,Y_2}$ is the same as $Q_{Y_1,Y_2}$. Since the marginals also have to be equal, we have $\bar{Q}_{Y_1}=Q_{Y_1}=(\frac{1-p}{2},p,\frac{1-p}{2})$. Let $P_{\bar{S}_1}(0)=t$. Then, $\bar{Q}_{Y_1}(0)$ is given by
    \begin{align*}
        &\sum_{x,\bar{s}_1} P_X(x)P_{\bar{S}_1}(\bar{s}_1)\Wbar(0|x,\bar{s}_1) \\
        &= \frac{1}{2} t (1-p)(1-r) + \frac{1}{2} (1-t) (1-p) + \frac{1}{2} (1-t) (1-p)r \\
        &=\frac{1-p}{2} (1+r-2tr).
    \end{align*}
    This forces $t=0.5$, i.e. $P_{\bar{S}_1}$ has to be uniform. Now, observe that $Q_{Y_1,Y_2}(0,1)=0$ while, irrespective of $P_{\bar{S}_2|\bar{S}_1}$, we have $\bar{Q}_{Y_1,Y_2}(0,1) > 0$ since $r>0$. This is a contradiction and hence $Q_{Y_1,Y_2} \notin \bar{\mathcal{Q}}_2$. Therefore, $\bpvtexp > 0$ for all $\eps>0$ by Theorem~\ref{thrm:bhlp1}.
	
	The above argument does not account for an adaptive adversary. In Appendix~\ref{app:example1} we show that even with an adaptive adversary the above transmission scheme leads to a positive exponent.
\end{example}
\begin{remark}\label{rem:det-rand-separation}
Observe that $\DOptDet\leq D_{\textup{pvt,iid}}$. This is a consequence of the fact that for $P_X$ such that $P_X(x)=1$ for some $x\in\cX$, the corresponding $\cQ$ and $\bar{\cQ}$ are $\convHull(\nullHyp_x)$ and $\convHull(\altHyp_x)$ respectively. In the above example, we conclude that $0=D_{\textup{pvt,iid}}<\bpvtexp$ for all $\eps > 0$; therefore, we have $\advExpDet\inp{\nullHyp,\altHyp}<\bpvtexp$. 
\end{remark}

\begin{remark} Example~\ref{ex:role_of_memory} demonstrates that, in the setting of encoders with private randomness, the error exponent can be strictly improved by drawing channel inputs i.i.d. as blocks of two symbols (instead of just one symbol at a time). It is conceivable that, in general, the optimal error exponent could only be achieved asymptotically  through a sequence of schemes that rely on drawing channel inputs as blocks of increasing length. Towards this, ~\cite{bakshi2025} gives an example of a channel, over which schemes involving drawing inputs as blocks of length $3$ strictly improve upon schemes that involve drawing inputs as blocks of length $2$.
\end{remark}
In the rest of this section, we give an achievable lower bound on the error exponent $\bpvtexp$ and characterize the pairs $\inp{\nullHyp,\altHyp}$ for which it is positive\footnote{This characterization is implicit in~\cite[Corollary 1]{chaudhuri2021compoundArxiv}. Note that the ``deterministic coding'' transmitter there has access to the message that serves as a source of private randomness for the testing problem.}. 
If $\convHull(\nullHyp)\cap\convHull(\altHyp)\neq \emptyset$, then $\bpvtexp = 0$ (by Theorem~\ref{thm:shared_characterization}). This follows from the fact that the adversary can choose $S^n$ and $\bar{S}^n$ i.i.d. so that a channel in the intersection may be induced, which renders the hypotheses indistinguishable irrespective of the transmission scheme. It turns out that when the transmitter only has private randomness, a more carefully chosen adversary strategy which now depends on the transmission scheme may render $\bpvtexp = 0$ for a larger class of $\inp{\nullHyp,\altHyp}$ pairs.
\begin{definition}[\hspace{-0.02cm}{\cite[eq. $(2)$]{chaudhuri2021compound}}]\label{defn:trans-sym}
The pair $\inp{\nullHyp,\altHyp}$ is {\em trans-symmetrizable} if there exist conditional distributions 
$P_{S|X},P_{\bar{S}|X}$ such that, for every $x, \tilde{x}\in\cX$ and $y\in \cY$,
\begin{align}\label{eq:trans_sym}
\sum_{s \in \cS}P_{S|X}(s|x)W(y|\tilde{x},s)=\sum_{\bar{s} \in \bar{\cS}}P_{\bar{S}|X}(\bar{s}|\tilde{x})\Wbar(y|x,\bar{s}).
\end{align}
\end{definition} 

Trans-symmetrizability was shown to be a unique condition, not a consequence of symmetrizability of either of the AVCs. Non-trans-symmetrizability and disjointness of the convex hulls of channel sets was shown to be necessary and sufficient for the detection of the hypothesis with vanishing error probabilities~\cite[Corollary 1]{chaudhuri2021compoundArxiv}. In Theorem 3 below, we show that the same condition is also necessary and sufficient for achieving a non-zero error-exponent. We also provide a lower bound on the error exponent when it is positive.

Consider a trans-symmetrizable pair $\inp{\nullHyp,\altHyp}$ and a (non-adaptive\footnote{This discussion can be modified to handle an adaptive transmission scheme if the adversary is also adaptive.}) transmission scheme $\hat{P}$. We will demonstrate (non-adaptive) adversary strategies under which the detector is unable to distinguish between the hypotheses. Under hypothesis $H_1$, the adversary, independent of the transmitter, samples $\tilde{X}^n$ according to $\hat{P}$ and passes it through the (memoryless) channel $P_{\bar{S}|X}$ of Definition~\ref{defn:trans-sym} to produce $\bar{S}^n$. This induces the following distribution on the channel output vector:
\begin{align*}
&\sum_{x^n,{\bar{s}}^n}\hat{P}(x^n)\left[\sum_{\tilde{x}^n}\hat{P}(\tilde{x}^n)\prod_{i=1}^{n}\inp{P_{\bar{S}|X}(\bar{s}_i|\tilde{x}_{i})}\right]\Wbar^n(y^n|x^n,{\bar{s}}^n)\nonumber\\
&=
\sum_{x^n, \tilde{x}^n}\hat{P}(x^n)\hat{P}(\tilde{x}^n)\prod_{i=1}^{n}\left[\sum_{\bar{s}_i\in \bar{\cS}}P_{\bar{S}|X}(\bar{s}_i|\tilde{x}_{i})\Wbar(y_i|x_i,\bar{s}_i)\right]\nonumber\\
&\stackrel{(a)}{=}
\sum_{\tilde{x}^n, x^n}\hat{P}(\tilde{x}^n)\hat{P}(x^n)\prod_{i=1}^{n}\left[\sum_{{s}_i\in {\cS}}P_{{S}|X}(s_i|x_i)W(y_i|\tilde{x}_{i},s_i)\right]\nonumber\\
&=
\sum_{\tilde{x}^n,{s}^n}\hat{P}(\tilde{x}^n)\left[\sum_{x^n}\hat{P}(x^n)\prod_{i=1}^{n}\inp{P_{{S}|X}(s_i|x_i)}\right]W^n(y^n|\tilde{x}^n,{s^n})
\end{align*}
where $(a)$ follows from \eqref{eq:trans_sym}. This is identical to the channel output distribution under hypothesis $H_0$ if the adversary samples from $\hat{P}$ (independent of the transmitter) and passes through the channel $P_{{S}|X}$ of Definition~\ref{defn:trans-sym} to produce its ${S}^n$. Thus, $\bpvtexp = 0$ if $\inp{\nullHyp,\altHyp}$ is trans-symmetrizable. 
The example below 
establishes a separation between shared and private randomness.
\begin{example}[{\hspace{-0.02cm}\cite[Example~1]{chaudhuri2021compound}}]\label{ex:pvt-shared-separation}
Let  $\cX = \cS = \bar{\cS} = \{0,1\}$ and $\cY = \{0,1\}^2$. Suppose $W$ deterministically outputs $Y=(X,S)$ while $\Wbar$ outputs $Y=(\bar{S},X)$. Clearly, $\convHull(\nullHyp)\cap\convHull(\altHyp)=\emptyset$. Hence, by Theorem~\ref{thm:shared_characterization}, $\advExpShared>0$. However, $\inp{\nullHyp,\altHyp}$ is trans-symmetrizable since $P_{S|X}(x|x) = P_{\bar{S}|X}(x|x)=1$ for all $x\in \cX$ satisfies \eqref{eq:trans_sym}. Hence $\bpvtexp = 0$.
\end{example}

Note that  if $\convHull(\nullHyp)\cap\convHull(\altHyp)=\emptyset$, there exists a constant $\zeta_1>0$ such that for every $P_{\bar{S}}$ on $\bar{\cW}$ and $P_{S}$ on $\cW$, 
 \begin{align}
 &\max_{x, y}\inl{\sum_{\bar{s}}P_{\bar{S}}(\bar{s})\bar{W}(y|x, \bar{s})- \sum_{s}P_{S}(s)W(y|x, s)}>\zeta_1.\label{eq:gap_empty_intersection}
\end{align}

Also, if $\inp{\nullHyp,\altHyp}$ is not trans-symmetrizable, there exists $\zeta_2>0$ such that for every $P_{S|X}(s|x'),\, s\in \cS, x'\in \cX$ and $P_{\bar{S}|X}(\bar{s}|x), \bar{s}\in \bar{\cS}, x\in \cX$ 

\begin{align} \label{eq:gap_to_sym}
&\max_{x, x', y}\inl{\sum_{s \in \scriptS}P_{S|X}(s|x')W(y|x,s)-\sum_{\bar{s} \in \scriptSbar}P_{\bar{S}|X}(\bar{s}|x)W(y|x',\bar{s})} > \zeta_2. 
\end{align}

Our lower bound on $\bpvtexp$ is in terms $\zeta_1$ and $\zeta_2$ which quantitatively measure respectively how far the pair $\inp{\nullHyp,\altHyp}$ is from having a non-empty intersection of their convex hulls and how far it is from being trans-symmetrizable; Lemma~\ref{app_lemma:disambiguity} and its proof in Appendix~\ref{app:pvt_rand} make this connection concrete.

Our main theorem for this section is the following:
\begin{theorem}\label{thm:pvt_rand}
\begin{align*}
&\bpvtexp = 0\text{ if } \inp{\nullHyp,\altHyp} \text{ is trans-symmetrizable or }\\
 &\qquad\qquad\qquad\quad\;\convHull(\nullHyp)\cap\convHull(\altHyp)\neq \emptyset.\\
\intertext{Otherwise,}
&\bpvtexp \geq  \max \inb{\min\inb{\frac{\zeta_1^2}{5|\cX|^2}, \frac{\zeta_2^2}{11|\cX|^4}}, \advExpDet(\nullHyp,\altHyp)} .
\end{align*}
\end{theorem}




Since $\zeta_1>0$ if $\convHull(\nullHyp)\cap\convHull(\altHyp)=\emptyset$ and $\zeta_2>0$ if $\inp{\nullHyp,\altHyp}$ is not trans-symmetrizable, we have the following characterization of pairs $\inp{\nullHyp,\altHyp}$ for which the Chernoff-Stein exponent $\bpvtexp$ is positive.
\begin{corollary}
$\bpvtexp >0$ if and only if $\inp{\nullHyp,\altHyp}$ is not trans-symmetrizable and $\convHull(\nullHyp)\cap\convHull(\altHyp)= \emptyset$.
\end{corollary}
This recovers \cite[Corollary 1]{chaudhuri2021compoundArxiv} which gave the same characterization for $\inp{\nullHyp,\altHyp}$ which allow hypothesis testing with vanishing probability of error when the transmitter has private randomness (in the form of a random message there). Our proof (in Appendix~\ref{app:pvt_rand}) of the lower bound to $\bpvtexp$ in Theorem~\ref{thm:pvt_rand}, which is inspired by \cite{chaudhuri2021compoundArxiv}, entails significant careful modifications to the detector and the probability of error analysis there.

\section{On the Role of Adaptivity}\label{sec:adaptive}
\subsubsection{With shared randomness} Our results hold even if the transmitter and/or adversary is adaptive. We proved the achievability part of Theorem~\ref{thrm:shared_weak} assuming that the adversary is adaptive and the converse assuming the transmitter is adaptive.

\subsubsection{Deterministic schemes} Here the optimal exponent remains unchanged even if the adversary is adaptive. This is also the case if both the adversary and the transmitter are adaptive. These follow from our achievability proof which is shown assuming an adaptive adversary and the converse which is shown when (a) both the transmitter and adversary are non-adaptive and (b) when both are adaptive (see Appendix~\ref{app:det_weak_fb}). It is also easy to see that, in general, if the transmitter is adaptive and the adversary is not, the exponent could be improved. The transmitter and detector may extract some randomness unknown to the adversary from the channel output feedback of, say, the first half of the block, and use this to implement a scheme with shared randomness during the second half. Since there are channels for which deterministic exponent is zero while the exponent under shared randomness is positive (for instance, see Example~\ref{ex:pvt-shared-separation}), these (possibly augmented by an independent random channel output component which provide additional shared randomness) serve as examples where such an improvement is feasible.

\subsubsection{With private randomness} If the adversary is non-adaptive and the transmitter is adaptive, improved exponents are possible along the lines of the above discussion, i.e. feedback from the detector to the transmitter can be used to simulate shared randomness. There are channels where the exponent with shared randomness is positive, while that with private randomness is zero (specifically, trans-symmetrizable but with $\convHull(\nullHyp) \cap \convHull(\altHyp) = \emptyset$; see Example~\ref{ex:pvt-shared-separation}). We also showed that memoryless schemes may be strictly sub-optimal even if the adversary is adaptive (Appendix~\ref{app:example1}). Also, the impossibility result in Theorem~\ref{thm:pvt_rand} can be shown when both the transmitter and adversary are adaptive. 

%
%

\section{Sequential setting} \label{sec:seq_shared}
In this section we study sequential versions of the problems covered in the previous sections. We show that in the sequential setting we can simultaneously achieve the two Chernoff-Stein exponents in each of the three settings: (i) shared randomness, (ii) deterministic and (ii) private randomness. We describe the problem in the sequential setting for the shared randomness case. The description for the other two cases are similar.

The test now comprises of a transmitter strategy, stopping time and a decision rule. A sequential test $\phi$ is defined by the tuple $(\hat{P},\tau, Z)$, where $\hat{P}=P_{X_1}P_{X_2|X_1}P_{X_3|X_1,X_2}\cdots$ is the transmitter strategy, $\tau$ is a stopping time of the filtration $\mathcal{F}_0 \subseteq \mathcal{F}_1 \cdots \subseteq \mathcal{F}_t \cdots \subseteq \mathcal{F}$ where $\mathcal{F}_t :=\ \sigma\{X_1,Y_1,\ldots,X_t,Y_t\}$, and $Z:\mathcal{F}_{\tau}\to\{0,1\}$ is a $\mathcal{F}_{\tau}$-measurable function that specifies the decision rule applied by the detector. Let $\mathcal{H}$ denote the set of all stopped sequences. Let $A$ be the acceptance region for $H_0$, i.e., stopped sequences which map $Z$ to $0$. Let $\hat{P}_S=P_{S_1}P_{S_2|S_1}P_{S_3|S_1,S_2}\cdots$ be the adversary strategy under $H_0$ and $\hat{P}_{\bar{S}}=P_{\bar{S}_1}P_{\bar{S}_2|\bar{S}_1}P_{\bar{S}_3|\bar{S}_1,\bar{S}_2} \cdots$ be the adversary strategy under $H_1$. For a given transmitter strategy $\hat{P}$ and a pair of adversary strategies $\hat{P}_S$ and $\hat{P}_{\bar{S}}$, let $\Qshared$ and $\Qbarshared$ be the measures on $X_1,Y_1,X_2,Y_2,\ldots$ under $H_0$ and $H_1$ respectively. Thus, the joint distribution of $X_1, Y_1, X_2, Y_2 \cdots X_t,Y_t$ under $H_0$ is given by
\begin{align} \label{eqn:q_shared_seq}
\Qshared(x^t,y^t) = \sum \limits_{s^t} \prod_{i=1}^{t} P_{X_i|X^{i-1}}(x_i|x^{i-1})P_{S_i|S^{i-1}}(s_i|s^{i-1})W(y_i|x_i,s_i).
\end{align}
A similar expression is can be written for $\Qbarshared(x^t,y^t)$ under $H_1$. The type-I error is given by
\begin{equation*}
\alpha(\phi, \hat{P}_S) = \Qshared(A^c).
\end{equation*}
The type-II error is given by
\begin{equation*}
\beta(\phi, \hat{P}_{\bar{S}}) \defineqq \Qbarshared(A).
\end{equation*}
If the test is randomized, then we can take an expectation over the random choice of $A$. A pair of exponents $(E_0,E_1)$ is said to be \emph{achievable in the sequential sense}, if there exists a sequence of tests $(\phi_n=(\hat{P}_n,\tau_n, Z_n))_{n \in \mathbb{N}}$ such that
\begin{align*}
\liminf_{n \rightarrow \infty} -\frac{1}{n} \log \sup_{\hat{P}_S} \alpha(\phi_n, \hat{P}_S) &\ge E_0  \\
\liminf_{n \rightarrow \infty} -\frac{1}{n} \log \sup_{\hat{P}_{\bar{S}}} \beta(\phi_n, \hat{P}_{\bar{S}}) &\ge E_1,
\end{align*}
and $\sup_{\hat{P}_S} \mathrm{E}[\tau_n] \le n$, $\sup_{\hat{P}_{\bar{S}}} \mathrm{E}[\tau_n] \le n$ for $n > n_0$ for some large enough $n_0$. We define $\mathcal{E}_{\textup{sh}}^{\textup{seq}}(\nullHyp,\altHyp)$ to be the set of achievable pair of exponents for the shared randomness case. 

For the deterministic case, $\hat{P}$ is a point mass on a fixed sequence $(x_1,x_2, \ldots)$. For the deterministic and private randomness case, $\tau$ is a stopping time with respect to the filtration $\mathcal{F}_0 \subseteq \mathcal{F}_1 \cdots \subseteq \mathcal{F}_t \cdots \subseteq \mathcal{F}$ where $\mathcal{F}_t := \sigma\{Y_1,\ldots,Y_t\}$, $Z:\mathcal{F}_{\tau}\to\{0,1\}$ is a $\mathcal{F}_{\tau}$-measurable decision rule, and the acceptance region $A$ is the subset of stopped sequences for which the detector accepts $H_0$. $Q_{\textup{priv}}(y^t)$ can be obtained by marginalizing $\Qshared(x^t,y^t)$ over $x^t$. $Q_{\textup{det}}(y^t)$ can be obtained from $\Qshared(x^t,y^t)$ by replacing the input distribution with a point mass on the fixed sequence. Let $\mathcal{E}_{\textup{det}}^{\textup{seq}}(\nullHyp,\altHyp)$ and $\mathcal{E}_{\textup{priv}}^{\textup{seq}}(\nullHyp,\altHyp)$ be the set of achievable pair of exponents for the deterministic and private randomness cases respectively.

\paragraph*{Fixed-Length to Sequential Tests} We first outline  the general form of the sequential tests employed in the schemes given in this section. Recall that a fixed length scheme is a pair $\phi_n =(\hat{P}_n,n,Z_n)$ (i.e., a sequential test with $\tau_n=n$). Let $\mathbb{P}$ (resp. $\mathbb{Q}$) be the distribution induced on the observations under $H_0$ (resp. $H_1$). Thus, the type-I and type-II errors are given by
\begin{align*}
    \bar{\alpha}(\phi_n) &= \inf_{\phi_n} \sup_{\hat{P}_S} \mathbb{P}(Z_n=1) \\
    \bar{\beta}(\phi_n) &= \inf_{\phi_n} \sup_{\hat{P}_{\bar{S}}} \mathbb{Q}(Z_n=0).
\end{align*}
Concretely, assume the existence of three fixed length schemes. The first scheme is such that both types of errors decay to zero with increasing blocklength. The second (resp. third) scheme is such that it achieves an exponent for type-I error (resp. type-II error) while driving the type-II error (resp. type-I error) to zero with increasing blocklength. Observe that the second (or third) scheme satisfies the requirements of the first scheme, but for the sake of exposition, we keep them separate. Let $\theta,\gamma \in (0,1)$ be two parameters which we will set later. The sequential test proceeds in rounds. Each round is of length $n'=(1-\theta)n$ and consists of two phases. The first phase (or trial phase) is of length $\gamma n'$. In this phase, we need a scheme which can make a (tentative) decision such that both the types of errors decay to zero as the block-length ($\gamma n'$) goes to infinity. The second phase (or confirmation phase) is of length $(1-\gamma)n'$. In this phase, depending on the trial phase decision we use a scheme to achieve the corresponding (fixed length) Chernoff-Stein exponent. For example, if the trial phase decision was $\nullHyp$ we use the scheme which achieves the exponent in Theorem~\ref{thrm:shared_weak}. If the decisions in the confirmation and trial phases match, we stop. Else, we go to the next round. The parameters $\theta, \gamma$ are chosen so that the expected stopping time is less than or equal to $n$.

The following lemma shows how the three schemes can be combined to simultaneously achieve exponents for both types of errors in the sequential case. 
\begin{lemma} \label{lemma:seq_test}
    Let $\{T_n\}$ be a sequence of fixed length schemes such that $\bar{\alpha}(T_n),\bar{\beta}(T_n) \rightarrow 0$ as $n \rightarrow \infty$. Let $\{C_n^0\}$ be a sequence of fixed length schemes such that $\bar{\beta}(C_n^0) \rightarrow 0$ as $n \rightarrow \infty$ and 
    \begin{equation*}
    \liminf_{n \rightarrow \infty} -\frac{1}{n} \log \bar{\alpha}(C_n^0) = E_0.    
    \end{equation*}
    Let $\{C_n^1\}$ be a sequence of fixed length schemes such that $\bar{\alpha}(C_n^1) \rightarrow 0$ as $n \rightarrow \infty$ and 
    \begin{equation*}
     \liminf_{n \rightarrow \infty} -\frac{1}{n} \log \bar{\beta}(C_n^1) = E_1.   
    \end{equation*}
    Then the point $(E_0,E_1)$ is \emph{achievable in the sequential sense} using tests made up of repeated use of fixed length test sequences $\{T_n\}$, $\{C_n^0\}$ and $\{C_n^1\}$.
\end{lemma}
\begin{proof}
    We construct a sequence of \emph{sequential} tests $\{\phi_n\}$ as follows. Let $\theta, \gamma > 0$ be two parameters whose values wil be specified later. The test $\phi_n$ consists of rounds of length $n'=(1-\theta)n$. At the beginning of each round $r$, we use the scheme $T_{\gamma n'}$ on the first $\gamma n'$ symbols and output $\tilde{Z}_r$ (known as tentative decision). If $\tilde{Z}_r = 0$ (respectively $\tilde{Z}_r = 1$), we then use scheme $C_n^1$ (resp. $C_n^0$) on the remaining $(1-\gamma) n'$ symbols and output $C_r$ (known as confirmation decision). If $(C_r = \tilde{Z}_r)$, we stop and declare $C_r$ to be our decision. Else, we proceed to the next round and repeat. We first show that $\mathrm{E}_{\mathbb{P}}[\tau_n] \le n$, $\mathrm{E}_{\mathbb{Q}}[\tau_n] \le n$ for some large enough $n$. We show the former, the latter follows by a symmetric argument. Let $R$ be the random variable denoting the number of rounds until the confirmation decision matches the tentative decision. Thus, $\tau_n = n'R$. Thus, to get an upper bound on the expected stopping time, we need an upper bound on the expected number of rounds. Observe that $R$ is a geometric random variable. Since we want an upper bound on its expected value, it suffices to upper bound the failure probability. We go to the next round in the event $\{C_r \neq \tilde{Z}_r\}$. We have
        \begin{align*}
            \mathbb{P}(C_r \neq \tilde{Z}_r) &= \mathbb{P}(\tilde{Z}_r=0,C_r=1) + \mathbb{P}(\tilde{Z}_r=1,C_r=0) \\
            &\le \mathbb{P}(C_r=1|\tilde{Z}_r=0) + \mathbb{P}(\tilde{Z}_r=1)
        \end{align*}
        Observe that 
        \begin{equation*}
            \mathbb{P}(\tilde{Z}_r=1) = \bar{\alpha}(T_{\gamma n'}).
        \end{equation*}
        Also, we know that
        \begin{align*}
            \mathbb{P}(C_r=1|\tilde{Z}_r=0) = \bar{\alpha}(C_{(1-\gamma)n'}^1)
        \end{align*}
        Plugging in $n'=(1-\theta)n$, we get
        \begin{align*}
           \mathbb{P}(C_r \neq \tilde{Z}_r) &\le \bar{\alpha}(T_{\gamma n'}) + \bar{\alpha}(C_{(1-\gamma)n'}^1).
        \end{align*}
        The expected stopping time can now be bounded as
        \begin{align*}
            \mathbb{E}[\tau_n] &= n' \mathbb{E}[R] \\
            &\le (1-\theta)n\frac{1}{1-(\bar{\alpha}(T_{\gamma n'}) + \bar{\alpha}(C_{(1-\gamma)n'}^1))} \\
            &= (1-\theta)n\frac{1}{1-(\bar{\alpha}(T_{\gamma (1-\theta)n}) + \bar{\alpha}(C_{(1-\gamma)(1-\theta)n}^1))}.
        \end{align*}
        Since $\bar{\alpha}(T_{\gamma (1-\theta)n}), \bar{\alpha}(C_{(1-\gamma)(1-\theta)n}^1)) \rightarrow 0$ as $n \rightarrow \infty$, for large enough $n$, we have
        \begin{equation*}
            \theta > \bar{\alpha}(T_{\gamma (1-\theta)n}) + \bar{\alpha}(C_{(1-\gamma)(1-\theta)n}^1)).
        \end{equation*}
        Thus, we have $\mathbb{E}_{\mathbb{P}}[\tau_n] \le n$ for large enough $n$.
        We now analyse the error exponents. Observe that the error under $H_0$ happens when $\tilde{Z}_R=C_R=1$, i.e. our tentative decision in the trial phase is wrong and we confirm it in the confirmation phase. Thus, $\bar{\alpha}(\phi_n) = \mathbb{P}(\tilde{Z}_R=1, C_R=1)$. For any fixed $r$, we have
        \begin{align*}
            \mathbb{P}(\tilde{Z}_r=1, C_r=1) &\le \mathbb{P}(C_r=1|\tilde{Z}_r=1) \\
            &\overset{(a)}{=} \bar{\alpha}(C_{(1-\gamma)n'}^0) \\
            &\overset{(b)}{\le} 2^{-(1-\gamma)E_0 n'} \\
            &= 2^{-(1-\gamma)(1-\theta) E_0 n}.
        \end{align*}
        Here, $(a)$ follows from the construction of our test and $(b)$ follows from the property of the scheme $C_n^0$. Since this holds for all $r$, we have
        \begin{equation*}
            \bar{\alpha}(\phi_n) \le 2^{-(1-\gamma)(1-\theta) E_0 n}.
        \end{equation*}
        Now, for any $\eta > 0$, we can choose $\gamma, \theta$ such that
        \begin{equation*}
            \liminf_{n \rightarrow \infty} -\frac{1}{n} \log \bar{\alpha}(\phi_n) \ge E_0-\eta.
        \end{equation*}
        The error analysis under $H_1$ can be done similarly. This completes the proof.
\end{proof}

In the upcoming sections, we will elaborate on the precise forms of the trial and confirmation schemes for each of the three settings.

\subsection{Shared randomness}
Let $\DOptShBar$ and $\DOptSh$ be the Chernoff-Stein exponents for the type-I and type-II errors respectively. Recall that
\begin{equation*}
    \DOptSh := \sup \limits_{P_X} \min_{\substack{U \in \convHull(\nullHyp) \\ \Ubar \in \convHull(\altHyp)}} D(U\|\Ubar|P_{X}),  
\end{equation*}
and
\begin{equation*}
    \DOptShBar := \sup \limits_{P_X} \min_{\substack{U \in \convHull(\nullHyp) \\ \Ubar \in \convHull(\altHyp)}} D(\Ubar\|U|P_{X}).
\end{equation*}

\begin{theorem}[] \label{thrm:shared_weak_seq}
	Let $\nullHyp$ and $\altHyp$ be two sets of discrete memoryless channels which map $\mathcal{X}$ to $\mathcal{Y}$. The set of achievable pairs of exponents is given by
	\begin{equation} \label{eqn:shared_weak_seq}
	\mathcal{E}_{\textup{sh}}^{\textup{seq}}(\nullHyp,\altHyp) = \inb{ (E_0,E_1):E_0 \le \DOptShBar, E_1 \le \DOptSh}.
	\end{equation}
\end{theorem}
\begin{proof}
    {\em Achievability:}
        We will construct a sequential test $\phi_n$ by repeated use of fixed length tests $T_n, C_n^0,C_n^1$ on the lines of Lemma~\ref{lemma:seq_test}. First fix distribution $\tilde{P}_X$ such that $\tilde{P}_X U \neq \tilde{P}_X \overline{U}$ for all $U \in \convHull(\nullHyp), \Ubar \in \convHull(\altHyp)$. If such a distribution doesn't exist, then the exponents will be zero.
        We first describe $C_n^1$. It is exactly the fixed length scheme that achieves the Chernoff-Stein exponent (for the type-II error) in Theorem~\ref{thrm:shared_weak}. $C_n^0$ is the fixed length scheme that achieves the Chernoff-Stein exponent (for the type-I error). The above schemes satisfy the conditions required Lemma~\ref{lemma:seq_test} with $E_0=\DOptShBar,E_1=\DOptSh$ respectively (refer Theorems~\ref{thrm:shared_weak}, \ref{thrm:bhlp1}). One of these tests can also be used as $T_n$ since we just need $\bar{\alpha}(T_n),\bar{\beta}(T_n) \rightarrow 0$ as $n \rightarrow \infty$. Invoking Lemma~\ref{lemma:seq_test} completes the proof of achievability.

    {\em Converse:}
    Assume that there is a sequence of sequential tests $(\phi_n)_{n \in \mathbb{N}}$ that achieves the pair $(E_0,E_1)$ such that $E_0>0,E_1>0$. Fix the following attack strategy. Under $H_0$ the adversary chooses $P_S$ i.i.d such that $\sum_{s}P_S(s)W(.|.,s)=U'$ and under $H_1$ it chooses $P_{\bar{S}}$ such that $\sum_{\bar{s}}P_{\bar{S}}(\bar{s})\Wbar(.|.,\bar{s})=\Ubar'$. The choice of $U',\Ubar'$ will be specified later. We give the converse argument under the assumption that the transmitter has feedback. Thus, $\Qshared$ for a $t$ length sequence can be written as
    \begin{equation*}
        \Qshared(x^t,y^t) = \prod_{i=1}^{t} P_{X_i|X^{i-1},Y^{i-1}}(x_i|x^{i-1},y^{i-1})U'(y_i|x_i).
    \end{equation*}
    $\Qbarshared$ can be written similarly replacing $U'$ with $\Ubar'$. Let $\Qshared|_{\mathcal{F}_{\tau_n}}$ and $\Qbarshared|_{\mathcal{F}_{\tau_n}}$ be the measures restricted to $\mathcal{F}_{\tau_n}$, i.e. the set of stopped sequences. By data processing inequality, we have
    \begin{align} \label{eqn:seq_shared_conv_dp}
       D(\text{Bern}(\alpha(\phi_n,\hat{P}_S))\|\text{Bern}(1-\beta(\phi_n, \hat{P}_{\bar{S}})))
       \le D(\Qshared|_{\mathcal{F}_{\tau_n}}\|\Qbarshared|_{\mathcal{F}_{\tau_n}}).
    \end{align}
    The R.H.S. in \eqref{eqn:seq_shared_conv_dp} can be decomposed as follows
    \begin{align*}
    D(\Qshared|_{\mathcal{F}_{\tau_n}}\|\Qbarshared|_{\mathcal{F}_{\tau_n}}) &\overset{(a)}{=} \mathbb{E}_{\Qshared|_{\mathcal{F}_{\tau_n}}}\insq{\log \prod_{i=1}^{\tau_n} \frac{U'(Y_i|X_i)}{\Ubar'(Y_i|X_i)}} \\
	&= \mathbb{E}_{\Qshared|_{\mathcal{F}_{\tau_n}}}\insq{\sum_{i=1}^{\tau_n} \log \frac{U'(Y_i|X_i)}{\Ubar'(Y_i|X_i)}}.
	\end{align*}
    The simplified form $(a)$ is because the $P_{X_i|X^{i-1},Y^{i-1}}(X_i|X^{i-1},Y^{i-1})$ terms cancel out. For brevity, we will drop $\Qshared|_{\mathcal{F}_{\tau_n}}$ from the subscript of the expectation. Let $S_{\tau_n}$ be the log-likelihood ratio.
    \begin{equation*}
        S_{\tau_n} := \sum_{i=1}^{\tau_n} \log \frac{U'(Y_i|X_i)}{\Ubar'(Y_i|X_i)}
    \end{equation*}
    Let $(V_{x,t})_t$ be the sequence of i.i.d. samples obtained when input symbol $x$ is chosen. 
    Let $N_x$ be a random variable denoting the number of times the input symbol $x$ was chosen. Observe that $\tau_n = \sum_{x \in \mathcal{X}} N_x$. Then, $S_{\tau_n}$ can be rewritten as
    \begin{align*}
        S_{\tau_n} = \sum_{x \in \sX} \sum_{t=1}^{N_x} \log \frac{U'(V_{x,t}|x)}{\Ubar'(V_{x,t}|x)}
    \end{align*}
    By applying Wald's lemma to $S_{\tau_n}$ (see proof of Lemma 1, \cite{kaufmann2016complexity}), we get
    \begin{equation*}
        \mathbb{E}[S_{\tau_n}] =  \sum_{x \in \sX} \mathbb{E}[N_x] D(W'(.|x) \| \Wbar'(.|x))
    \end{equation*}
    Thus, the R.H.S. in \eqref{eqn:seq_shared_conv_dp} can be written as
    \begin{align*}
        D(\Qshared|_{\mathcal{F}_{\tau_n}}\|\Qbarshared|_{\mathcal{F}_{\tau_n}}) &= \sum_{x \in \sX} \mathbb{E}[N_x] D(U'(.|x) \| \Ubar'(.|x)) \\
        &= \mathbb{E}[\tau_n] \sum_{x \in \sX} \frac{\mathbb{E}[N_x]}{\mathbb{E}[\tau_n]} D(U'(.|x) \| \Ubar'(.|x)) \\
        &\overset{(a)}{\le} n \sum_{x \in \sX} \frac{\mathbb{E}[N_x]}{\mathbb{E}[\tau_n]} D(U'(.|x) \| \Ubar'(.|x)) \\
        &\overset{(b)}{\le} n \sup \limits_{P_X} D(U'\|\Ubar'|P_X).
    \end{align*}
    The inequality $(a)$ follows since $\mathbb{E}_{\Qshared}[\tau_n] \le n$ for a valid test. The inequality $(b)$ follows from the fact that $P_X(x) = \frac{\mathbb{E}[N_x]}{\mathbb{E}[\tau_n]}$ is a distribution on $\sX$ and we take a supremum over all possible distributions. We now consider the worst pair $U',\Ubar'$ that can be chosen by the adversary. Thus, we have
    \begin{align*}
        D(\Qshared|_{\mathcal{F}_{\tau_n}}\|\Qbarshared|_{\mathcal{F}_{\tau_n}}) &\le n \min_{\substack{U \in \convHull(\nullHyp) \\ \Ubar \in \convHull(\altHyp)}} \sup \limits_{P_X} D(U\|\Ubar|P_X) \\
        &= n \DOptSh.
    \end{align*}
    The final equality is because $\DOptSh$ is a saddle point and $\min$ and $\sup$ can be interchanged. The L.H.S. in \eqref{eqn:seq_shared_conv_dp} can be lower bounded as follows,
    \begin{align*}
        &D(\text{Bern}(\alpha(\phi_n,\hat{P}_S))\|\text{Bern}(1-\beta(\phi_n, \hat{P}_{\bar{S}})))  \\
        &= -h(\alpha(\phi_n,\hat{P}_S)) - \alpha(\phi_n,\hat{P}_S) \log (1-\beta(\phi_n, \hat{P}_{\bar{S}})) -(1-\alpha(\phi_n,\hat{P}_S)) \log \beta(\phi_n, \hat{P}_{\bar{S}}) \\
        &\ge -h(\alpha(\phi_n,\hat{P}_S)) -(1-\alpha(\phi_n,\hat{P}_S)) \log \beta(\phi_n, \hat{P}_{\bar{S}}).
    \end{align*}
    The last inequality holds because we drop a non-negative term. Thus, we have
    \begin{align*}
        -\frac{\log \beta(\phi_n, \hat{P}_{\bar{S}})}{n} \le \frac{\DOptSh + \frac{h(\alpha(\phi_n,\hat{P}_S))}{n}}{(1-\alpha(\phi_n,\hat{P}_S))}
    \end{align*}
    Since we assume that $E_0>0$, by definition of $E_0$ we have $\alpha(\phi_n,\hat{P}_S) \rightarrow 0$ as $n \rightarrow \infty$. Thus, we get
    \begin{equation} \label{eqn:seq_shared_conv_e1}
        \lim_{n \rightarrow \infty} -\frac{\log \beta(\phi_n, \hat{P}_{\bar{S}})}{n} \le \DOptSh.
    \end{equation}
    Now fix a different attack strategy. Under $H_0$ the adversary chooses $P_S$ i.i.d such that $\sum_{s}P_S(s)W(.|.,s)=U''$ and under $H_1$ it chooses $P_{\bar{S}}$ such that $\sum_{\bar{s}}P_{\bar{S}}\Wbar(.|.,\bar{s})=\Ubar''$. By approaching on similar lines, we get that if $E_1>0$, then
    \begin{equation} \label{eqn:seq_shared_conv_e0}
        \lim_{n \rightarrow \infty} \frac{\alpha(\phi_n,\hat{P}_S)}{n} \le \DOptShBar.
    \end{equation}
    Taken together, \eqref{eqn:seq_shared_conv_e1} and \eqref{eqn:seq_shared_conv_e0} complete the proof.
\end{proof}

\subsection{Deterministic}
A test $\phi$ is defined by the tuple $(\hat{P},\tau, Z)$, where the transmitter strategy $\hat{P}$ is a point mass on a sequence $(x_1,x_2,\cdots)$, $\tau$ is a stopping time of the filtration $\mathcal{F}_0 \subseteq \mathcal{F}_1 \cdots \subseteq \mathcal{F}_t \cdots \subseteq \mathcal{F}$ where $\mathcal{F}_t :=\ \sigma\{Y_1,\ldots,Y_t\}$, $Z:\mathcal{F}_{\tau}\to\{0,1\}$ is a $\mathcal{F}_{\tau}$-measurable function that specifies the decision rule applied by the detector. The definitions of errors and error exponents are analogous to the previous subsection.
Recall that
\begin{equation*}
    \DOptDet := \max \limits_{x} \min_{\substack{U_x \in \convHull(\nullHyp_x) \\ \Ubar_x \in \convHull(\altHyp_x)}} D(U_x\|\Ubar_x).   
\end{equation*}
Let
\begin{equation*}
    \DOptDetBar := \max \limits_{x} \min_{\substack{U_x \in \convHull(\nullHyp_x) \\ \Ubar_x \in \convHull(\altHyp_x)}} D(\Ubar_x\|U_x).
\end{equation*}
Recall that 
\begin{equation*}
\convHull(\nullHyp_x) \defineqq \inb{\sum \limits_{s \in \scriptS}P_{S}(s)W(.|x,s) : P_{S} \in \Delta_{\scriptS}},
\end{equation*}
$\convHull(\altHyp_x)$ is defined similarly with $\bar{S},\Wbar$ instead of $S,W$.
\begin{theorem}[] \label{thrm:det_weak_seq}
	Let $\nullHyp$ and $\altHyp$ be two sets of discrete memoryless channels which map $\mathcal{X}$ to $\mathcal{Y}$. The set of achievable pairs of exponents is given by
	\begin{equation} \label{eqn:det_weak_seq}
	\mathcal{E}_{\textup{det}}^{\textup{seq}}(\nullHyp,\altHyp) = \inb{ (E_0,E_1):E_0 \le \DOptDetBar, E_1 \le \DOptDet}.
	\end{equation}
\end{theorem}
\begin{proof}
    The proof of achievability is similar as in the case of shared randomness. We again invoke Lemma~\ref{lemma:seq_test} where $C_n^0$ and $C_n^1$ are fixed length schemes which achieve the Chenoff-Stein exponent in Theorem~\ref{thrm:det_weak}, $T_n$ is the same as either $C_n^0$ or $C_n^1$, $E_0=\DOptDetBar$ and $E_1=\DOptDet$. The proof of converse is also similar and works via the data processing inequality.
\end{proof}

\subsection{Private randomness}
A test $\phi$ is defined by the tuple $(\hat{P},\tau, Z)$, where $\hat{P}=P_{X_1}P_{X_2|X_1}P_{X_3|X_1,X_2}\cdots$ is the transmitter strategy, $\tau$ is a stopping time of the filtration $\mathcal{F}_0 \subseteq \mathcal{F}_1 \cdots \subseteq \mathcal{F}_t \cdots \subseteq \mathcal{F}$ where $\mathcal{F}_t := \sigma\{Y_1,\ldots,Y_t\}$, $Z:\mathcal{F}_{\tau}\to\{0,1\}$ is a $\mathcal{F}_{\tau}$-measurable decision function. The definitions of errors and error exponents are analogous to the shared randomness subsection. Let $\DOptPriv$ and $\DOptPrivBar$ be the (fixed length) Chernoff exponents for the private randomness case. Note that for this case, we do not have a single letter characterization for the exponents.

\begin{theorem}[] \label{thrm:priv_weak_seq}
	Let $\nullHyp$ and $\altHyp$ be two sets of discrete memoryless channels which map $\mathcal{X}$ to $\mathcal{Y}$. The corner point $(E_0=\DOptPrivBar,E_1=\DOptPriv) \in \mathcal{E}_{\textup{priv}}^{\textup{seq}}$.
\end{theorem}
\begin{proof}
    We again invoke Lemma~\ref{lemma:seq_test} for achievability. The scheme given in the achievability proof of Theorem~\ref{thm:pvt_rand} achieves a  positive exponent for both type-I and type-II errors. Thus, it can be used as $T_n$. $C_n^0$ and $C_n^1$ are fixed length schemes achieving the Chernoff-Stein exponent. Note that unlike in the previous cases, we do not have an explicit description of $C_n^0,C_n^1$ and use them as blackboxes. For this case, we do not have a converse argument.
\end{proof}

\subsection{Role of adaptivity in the sequential setting}
The proofs of achievability of a pair of exponents in the sequential setting work by invoking the achievability of the individual exponents in the fixed length setting (Lemma~\ref{lemma:seq_test}). Thus, the role of adaptivity is similar to the fixed length setting (refer to Section~\ref{sec:adaptive}). Thus, for the shared randomness and deterministic cases the achievability results hold even when the adversary is adaptive. For the private randomness case, the achievability results hold even when both the transmitter and adversary are adaptive. The converse proof in the shared randomness case works even when the transmitter is adaptive. In the deterministic case, the converse works when both the transmitter and adversary are adaptive.

\appendices

\section{Preliminaries} \label{app:prelim}

{\bf Adversarial Hypothesis Testing.}
Our achievability proofs use the adversarial Chernoff-Stein lemma and Chernoff information lemma from~\cite{brandao2020adversarial} which we briefly describe here. 
Let $\mathcal{Z}$ be a finite set. Let $\mathcal{P}, \mathcal{Q} \subseteq \mathbb{R}^{\mathcal{Z}}$ be closed, convex sets of probability distributions with a common support.
The adaptive adversary is specified by $\hat{p}_i : \mathcal{Z}^{i-1} \rightarrow \mathcal{P}$ and $\hat{q}_i : \mathcal{Z}^{i-1} \rightarrow \mathcal{Q}$ for $i\in[1:n]$.
For any $z^n \in \mathcal{Z}^n$, let $\hat{p}(z^n) := \prod_{i=1}^{n} \hat{p}_i(z^{i-1})(z_i)$ and $\hat{q}(z^n) := \prod_{i=1}^{n} \hat{q}_i(z^{i-1})(z_i)$. 
Let $A_n \subseteq \mathcal{Z}^n$ be an acceptance region for $\mathcal{P}$. For $\eps > 0$, the type-I and type-II errors are defined to be 
\begin{align*}
\alpha_n &\defineqq \sup_{\left(\hat{p}_i\right)_{i=1}^n} \hat{p}(A_n^c),
&\beta_n &\defineqq \sup_{\left(\hat{q}_i\right)_{i=1}^n} \hat{q}(A_n).
\end{align*}
The optimal type-II error when the type-I error is below $\eps$ is given by $\beta_n^{\eps} \defineqq \min_{A_n:\alpha_n \le \eps} \beta_n$. The adversarial Chernoff-Stein exponent is given by
\begin{equation*}
{\mathcal{E}}_{\textup{adv}}^{\epsilon}(\mathcal{P},\mathcal{Q}) \defineqq \lim_{n \rightarrow \infty} -\frac{1}{n} \log \beta_n^{\eps}.
\end{equation*}
For any pair $p \in \mathcal{P}, q \in \mathcal{Q}$, since the adversary may (non-adaptively) choose $\hat{p}_i=p$ and $\hat{q}_i=q$ for all $i\in[1:n]$, by the Chernoff-Stein lemma~\cite[Theorem~11.8.3]{cover1999elements} it is clear that $\advExp(\mathcal{P},\mathcal{Q}) \leq \min \limits_{p \in \mathcal{P}, q \in \mathcal{Q}} D(p\|q)$. In~\cite{fangwei1996hypothesis} it was shown that this upper bound is achievable if the adversary is non-adaptive. The following theorem states that this remains true even when the adversary is adaptive.
\begin{theorem}[Adversarial Chernoff-Stein Lemma~\cite{brandao2020adversarial}] \label{thrm:bhlp1}
	Let $\mathcal{Z}$ be a finite domain. For any pair of closed convex sets of probability distributions $\mathcal{P},\mathcal{Q} \subseteq \mathbb{R}^\mathcal{Z}$, 
	\begin{equation} \label{eqn:adv_chernoff_stein}
	\advExp(\mathcal{P},\mathcal{Q}) = \min \limits_{p \in \mathcal{P}, q \in \mathcal{Q}} D(p\|q).
	\end{equation}
 Let $(p_{\textup{CS}}^*,q_{\textup{CS}}^*) = \arg \min \limits_{p \in \mathcal{P}, q \in \mathcal{Q}} D(p\|q)$. The acceptance region which achieves the exponent in \eqref{eqn:adv_chernoff_stein} is given by
\begin{equation*}
    A_{n,\delta} = \inb{ z^n: \sum_{i=1}^{n} \log \frac{p_{\textup{CS}}^*(z_i)}{q_{\textup{CS}}^*(z_i)} \ge n(D(p_{\textup{CS}}^* \| q_{\textup{CS}}^*) - \delta) },
\end{equation*}
where $\delta > 0$. This ensures that
\begin{equation} \label{eqn:adv_chernoff_stein1}
    \hat{p}(A_{n,\delta}^c) \le \mathcal{O} \inp{\frac{1}{\delta^2 n}}
\end{equation}
for all $\hat{p}$. And 
\begin{equation} \label{eqn:adv_chernoff_stein2}
    \hat{q}(A_{n,\delta}) \le 2^{-n(D(p_{\textup{CS}}^* \| q_{\textup{CS}}^*) - \delta)}
\end{equation}
for all $\hat{q}$.
\end{theorem}

\section{Proof of Theorem~\ref{thrm:shared_strong}} \label{app:shared_strong}

Let $\mu_{XY}, \nu_{XY}$ be distributions on $\scriptX \times \scriptY$, $t \in \mathbb{R}$.
\begin{equation*}
\phiLarge(\mu_{Y} \| \nu_{Y}) \defineqq \sum_{\scriptY}\mu_{Y}^{1-t}\nu_{Y}^{t}
\end{equation*}
\begin{equation*}
\phiLarge(\mu_{Y|X} \| \nu_{Y|X} | \mu_{X}) \defineqq \mathbb{E}_{X \sim \mu_{X}}  \left[ \phiLarge(\mu_{Y|X} \| \nu_{Y|X}) \right]
\end{equation*}
$\phiSmall$ is defined to be $\log$ of the corresponding $\phiLarge$ quantity.

We construct a memoryless adversary strategy. Let $P_{S^n} = \prod_{i=1}^{n} P_{S_i}$, $P_{\bar{S}^n} = \prod_{i=1}^{n} P_{\bar{S_i}}$ where $P_{S_i}$ and $P_{\bar{S_i}}$ will be specified in course of the proof. Let $Q^{n}$ and $\bar{Q}^{n}$ denote the joint distributions on $\mathcal{X}^n \times \mathcal{Y}^n$ under $H_0$ and $H_1$ respectively. They are given by
	\begin{align} \label{eqn:shared_conv_q}
	Q^{n}(x^n,y^n) = 
	\prod_{i=1}^{n} \vec{Q}_{i}(x_i|x^{i-1},y^{i-1})
	\left( \sum_{s_i \in \scriptS} P_{S_i}(s_i)W(y_i|x_i,s_i) \right)
	\end{align}
	and
	\begin{align} \label{eqn:shared_conv_qbar}
	\bar{Q}^{n}(x^n,y^n) =
	\prod_{i=1}^{n} \vec{Q}_{i}(x_i|x^{i-1},y^{i-1})
	\left( \sum_{\bar{s}_i \in \scriptSbar} P_{\bar{S}_i}(\bar{s}_i)\Wbar(y_i|x_i,\bar{s}_i) \right).
	\end{align}
	Here, $\vec{Q}_{i}(x_i|x^{i-1},y^{i-1})$ denotes the transmitter strategy at the $i^{\textup{th}}$ timestep.
Define $\Qtilt^{i-1}$ to be
\begin{equation} \label{eqn:qtilt_shared}
\Qtilt^{i-1} = \frac{(Q^{i-1})^{1-t} (\bar{Q}^{i-1})^{t}}{\phiLarge(Q^{i-1} \| \bar{Q}^{i-1})}.
\end{equation}
From the definition of $\phiLarge(.\|.)$, we can see that $\Qtilt^{i-1}$ is a distribution on $\scriptX^{i-1} \times \scriptY^{i-1}$. Let $\tilde{Q}_{X_i}$ be the marginal on $X_i$ induced by $\Qtilt^{i-1} \cdot \vec{Q}_{i}$,
\begin{equation*}
\tilde{Q}_{X_i}(x_i) = \sum_{x^{i-1},y^{i-1}}\Qtilt^{i-1}(x^{i-1},y^{i-1}) \cdot \vec{Q}_{i}(x_i|x^{i-1},y^{i-1}).
\end{equation*}
Thus, we have
\begin{align*}
&\phiLarge(Q^{n} \| \bar{Q}^{n}) = \sum_{\domXY}(Q^{n})^{1-t}(\bar{Q}^{n})^{t} \\
&\overset{(a)}{=} \phiLarge(Q^{n-1} \| \bar{Q}^{n-1}) \sum_{\domXY} \Qtilt^{n-1} \vec{Q}_n (Q_{Y_n|X_n})^{1-t} (\bar{Q}_{Y_n|X_n})^{t} \\
&= \phiLarge(Q^{n-1} \| \bar{Q}^{n-1}) \cdot \phiLarge(Q_{Y_n|X_n} \| \bar{Q}_{Y_n|X_n} | \tilde{Q}_{X_n}),
\end{align*}
where $(a)$ follows from the factorizing $Q^{n}$ as $Q^{n}=Q^{n-1}\cdot\vec{Q}_n\cdot Q_{Y_n|X_n}$ and using \eqref{eqn:qtilt_shared}. We break down the term $\phiLarge(Q^{n-1} \| \bar{Q}^{n-1})$ in a similar manner. Repeating this process and finally taking $\log$ on both sides, we get
\begin{align*}
\phiSmall(Q^{n} \| \bar{Q}^{n}) &= \log \phiLarge(Q^{n} \| \bar{Q}^{n})\\
&= \sum_{i=1}^{n} \phiSmall(Q_{Y_i|X_i} \| \bar{Q}_{Y_i|X_i} | \tilde{Q}_{X_i})
\end{align*}
Define $\phiStarShared(t)$ to be
\begin{equation} \label{eqn:phiStarSharedDef}
\phiStarShared(t) \defineqq \sup \limits_{P_X} \min_{\substack{U \in \convHull(\nullHyp) \\ \Ubar \in \convHull(\altHyp)}} \phi_{t}(U \| \Ubar | P_{X}).
\end{equation}
We now specify $(P_{S_i},P_{\bar{S}_i})$ in the following manner. Consider the first term in the sum. By the definition of $\phiStarShared(t)$ in \eqref{eqn:phiStarSharedDef},
\begin{equation*}
\min_{P_{S_1},P_{\bar{S}_1}} \phiSmall(Q_{Y_1|X_1} \| \bar{Q}_{Y_1|X_1} | \tilde{Q}_{X_1}) \le \phiStarShared(t). 
\end{equation*}
Recall that $\phiSmall(.\|.)=-tD_{1-t}(.\|.)$ for $t<0$, where $D_{1-t}(.\|.)$ is the R\'enyi divergence of order $1-t$. Since $\mathcal{P}=\{ P_X U: U \in \convHull(\nullHyp) \}$, $\mathcal{Q} = \{ P_X \Ubar: \Ubar \in \convHull(\altHyp) \}$ are closed, convex sets and $D_{1-t}(.\|.)$ is lower semi-continuous \cite[Theorem 15]{van2014renyi}, such a minimum exists. We choose $(P_{S_1},P_{\bar{S}_1})$ such that $\phiSmall(Q_{Y_1|X_1} \| \bar{Q}_{Y_1|X_1} | \tilde{Q}_{X_1}) \le \phiStarShared(t)$. We now recursively specify all the $(P_{S_i},P_{\bar{S}_i})$ in a similar manner. Thus, we have
\begin{equation} \label{eqn:shared_conv_phi_ub}
\phiSmall(Q^{n} \| \bar{Q}^{n}) = \log \phiLarge(Q^{n} \| \bar{Q}^{n}) \le n\phiStarShared(t).
\end{equation} 
We now follow the approach of \cite[Section VI]{hayashi2009discrimination}, \cite{nagaoka2005strong}. Let $\typeOneErrRest$ and $\typeTwoErrRest$ be the type-1 and type-2 errors once the strategies of transmitter, detector and adversary are fixed. They are as defined in the Appendix~\ref{app:shared_strong}. Let 
\begin{equation*}
r \defineqq \liminf_{n \rightarrow \infty} \frac{-1}{n}\log \typeTwoErrRest
\end{equation*}
Our goal is to show that if $r > \DOptSh$, then then the type-1 error probability $\typeOneErrRest$ goes to $1$ exponentially fast.
As before the distribution of the decision is $\bern(\typeOneErrRest)$ under $H_0$ and $\bern(1-\typeTwoErrRest)$ under $H_1$. Since data processing inequality holds for $D_{1-t}(.\|.)$ for $t < 0$ \cite[Theorem 9]{van2014renyi}, we can apply it for $\phiLarge(.\|.)$.
\begin{align*}
\phiLarge(\bern(\typeOneErrRest) \| \bern(1-\typeTwoErrRest)) &\le \phiLarge(Q^{n} \| \bar{Q}^{n}) = e^{\phiSmall(Q^{n} \| \bar{Q}^{n})}
\end{align*}
Expanding out the L.H.S. and using \eqref{eqn:shared_conv_phi_ub}, we have
\begin{align*}
{(1-\typeOneErrRest)}^{1-t}{(\typeTwoErrRest)}^{t}  + {(\typeOneErrRest)}^{1-t}{(1-\typeTwoErrRest)}^{t} \le e^{n\phiStarShared(t)}.
\end{align*}
Since ${\typeOneErrRest}^{1-t}{(1-\typeTwoErrRest)}^{t} \ge 0$, it can be dropped while retaining the inequality. Taking $\log$ followed by $\liminf$ on both sides, we get
\begin{align*}
\liminf_{n \rightarrow \infty} -\frac{1}{n} \log (1-\typeOneErrRest) &\ge \frac{-tr-\phiStarShared(t)}{1-t} \\
&\ge \sup \limits_{t<0} \frac{-t}{1-t} \left( r-\frac{\phiStarShared(t)}{-t} \right).
\end{align*}
We now show that the L.H.S. $>0$ for some choice of $t<0$.
\begin{align*}
\lim_{t \rightarrow 0^{-}} \frac{\phiStarShared(t)}{-t} 
&\overset{(a)}{=} \sup \limits_{P_X} \min_{\substack{U \in \convHull(\nullHyp) \\ \Ubar \in \convHull(\altHyp)}} \lim_{t \rightarrow 0^{-}} \frac{\phi_{t}(U \| \Ubar | P_{X})}{-t} \\
&\overset{(b)}{=} \sup \limits_{P_X} \min_{\substack{U \in \convHull(\nullHyp) \\ \Ubar \in \convHull(\altHyp)}} D(U \| \Ubar | P_{X}) \overset{(c)}{=} \DOptSh.
\end{align*}
where $(a)$ is by the definition of $\phiStarShared$ in \eqref{eqn:phiStarSharedDef} and the assumption in \eqref{eqn:shared_conv_assumption}, $(b)$ follows from the fact that $\frac{\phi_{t}(U \| \Ubar | P_{X})}{-t} = D_{1-t}(U \| \Ubar | P_{X})$ when $t<0$ and by the continuity $D_{1-t}$ in $t$ \cite{van2014renyi}, $(c)$ by the definition of $\DOptSh$ \eqref{eqn:DoptShDef}. Since $r > \DOptSh$, we have $r-\frac{\phiStarShared(t')}{-t'} > 0$ for some $t' < 0$.
\begin{align*}
\liminf_{n \rightarrow \infty} -\frac{1}{n} \log (1-\typeOneErrRest) > 0
\end{align*}
This inequality holds true for all possible transmitter and detector strategies $(\vec{Q}, A_n)$. Thus, the probability of correctness under $H_0$ decays exponentially.


\section{Proof of Theorem~\ref{thrm:det_weak} (No Feedback)} \label{app:det_weak_nofb}
\paragraph{Achievability ($\advExpDet(\nullHyp,\altHyp) \ge \DOptDet$)}
We apply the same argument given in the achievability proof of Theorem~\ref{thrm:shared_weak} for a fixed choice of $x$. We then optimize over $x$ to complete the proof.

\paragraph{Converse ($\advExpDet(\nullHyp,\altHyp) \le \frac{\DOptDet}{1-\eps}$)}
Recall that transmitter strategy is a fixed tuple $(x_1,x_2,\ldots,x_n)$.
Consider a memoryless adversary strategy. Let $Q^n$ (resp. $\bar{Q}^n$) be the distribution induced on $\scriptY$ under $H_0$ (resp. $H_1$). In this setting, $D(Q^n \| \bar{Q}^n) = \sum_{i=1}^{n}D(Q_{Y_i}\|\bar{Q}_{Y_i})$, where $Q_{Y_i}, \bar{Q}_{Y_i}$ are the marginals on $Y_i$ under $H_0$ and $H_1$ respectively. It is easy to see that each term in the sum is upper bounded by $\DOptDet$. Thus, $D(Q^n \| \bar{Q}^n) \le n\DOptDet$. The rest of the proof then follows from the data processing inequality (e.g., see \cite[Section VI]{hayashi2009discrimination}).

\paragraph{Strong Converse} The proof is on similar to the proof of Theorem~\ref{thrm:shared_strong} (Appendix~\ref{app:shared_strong}).

\paragraph{Characterization ($\advExpDet(\nullHyp,\altHyp)>0 \iff \convHull(\nullHyp_x) \cap \convHull(\altHyp_x)=\emptyset$) for some $x$} The  if ($\Leftarrow$) part follows from Theorem~\ref{thrm:det_weak}. Consider the contrapositive of the only if ($\Rightarrow$) direction. Observe that under hypothesis $H_0$ (resp., $H_1$), the adversary may induce any conditional distribution $\convHull(\nullHyp_x)$ (resp., $\convHull(\altHyp_x)$) when the transmitter sends the symbol $x$. Hence, when the intersection is non-empty for all $x$, the adversary may induce the same conditional distribution under both hypotheses so that no transmission strategy (including an adaptive one) can distinguish between the hypotheses.

\section{Proof of Theorem~\ref{thrm:det_weak} (Feedback to Transmitter and Adversary)} \label{app:det_weak_fb}
The proof of achievability is same as Appendix~\ref{app:det_weak_nofb}.

\paragraph*{Converse ($\advExpDet(\nullHyp,\altHyp) \le \frac{\DOptDet}{1-\eps}$)}
We restrict the adversary to choose the next state independently conditioned on the previous outputs of the channel, i.e. $P_{S_i|S^{i-1},Y^{i-1}} = P_{S_i|Y^{i-1}}$, $P_{\bar{S}_i|\bar{S}^{i-1},Y^{i-1}} = P_{\bar{S_i}|Y^{i-1}}$ where $P_{S_i|Y^{i-1}}$ and $P_{\bar{S_i}|Y^{i-1}}$ will be specified in course of the proof. The transmitter strategy is given by a set of deterministic functions $\{ g_i: \scriptY^{i-1} \rightarrow \scriptX \}$, where $g_1$ is a constant function with value $x_1$. Let $Q^{n}$ and $\bar{Q}^{n}$ denote the joint distributions on $\mathcal{Y}^n$ under $H_0$ and $H_1$ respectively. $Q^{n}$ is given by
\begin{align} \label{eqn:det_fb_conv_q}
Q^{n}(y^n) = \prod_{i=1}^{n} \big( \sum_{s_i \in \scriptS} P_{S_i|Y^{i-1}}(s_i|y^{i-1}) W(y_i|g_i(y^{i-1}),s_i) \big).
\end{align}
$\bar{Q}^n$ is defined similarly with $\bar{S},\Wbar$. We again try to upper bound $D(Q^n \| \bar{Q}^n)$. 
\begin{equation} \label{eqn:det_fb_conv_kl}
D(Q^n \| \bar{Q}^n) = \sum_{i=1}^{n}D(Q_{Y_i|Y^{i-1}} \| \bar{Q}_{Y_i|Y^{i-1}}|Q_{Y^{i-1}})
\end{equation}
Consider the $i^{\textup{th}}$ term in \eqref{eqn:det_fb_conv_kl}. For each tuple $(y^{i-1})$, by the definition of $\DOptDet$ in \eqref{eqn:DoptDetDef}, we have
\begin{align} \label{eqn:det_min_condition}
\min_{\substack{P_{S_i|Y^{i-1}}(.|y^{i-1}) \\ P_{\bar{S}_i|Y^{i-1}}(.|y^{i-1})}} D(Q_{Y_i|Y^{i-1}}(.|y^{i-1}) \| \bar{Q}_{Y_i|Y^{i-1}}(.|y^{i-1}))
\le \DOptDet.
\end{align}
For each tuple $(y^{i-1})$, we specify $P_{S_i|Y^{i-1}}(.|y^{i-1})$ and $P_{\bar{S}_i|Y^{i-1}}(.|y^{i-1})$ such that they satisfy \eqref{eqn:det_min_condition}.\\ Since $D(Q_{Y_i|Y^{i-1}} \| \bar{Q}_{Y_i|Y^{i-1}}|Q_{Y^{i-1}})$ is an averaging over $y^{i-1}$, it is also upper bounded by $\DOptDet$. Repeating this argument for each term in the sum \eqref{eqn:det_fb_conv_kl}, we get $D(Q^n \| \bar{Q}^n) \le n\DOptDet$. The rest of the proof is similar to Theorem~\ref{thrm:shared_weak}.

\section{Role of Memory for a Privately Randomized Transmitter: Adapative Adversary Case} \label{app:example1}

Continuing the discussion from Example~\ref{ex:role_of_memory}, we now allow the adversary access to feedback, i.e. its choice of state can depend on the outputs of the previous transmission. The new state spaces for the adversary are $\scriptS^2 = \{0\}$ and $\scriptSbar^2 = \scriptSbar \times \Sigma$ where $\Sigma = \{ \sigma: \scriptY \rightarrow \{ 0,1 \} \}$. Observe that $\Sigma$ accounts for feedback. Note that $|\scriptSbar^2|=2 \times |\Sigma|=16$. The problem can now be thought of as a new hypothesis test between \\
$H_0$ : $\nullHyp^2=\{ W^2(.|.) \}$ where
\begin{equation*}
W^2((y_1,y_2)|(x_1,x_2)) = W(y_1|x_1)W(y_2|x_2)
\end{equation*}
and $H_1$ : $\altHyp^2 = \{ \Wbar^2(.|.,(\bar{s},\sigma)) :(\bar{s},\sigma) \in \scriptSbar \times \Sigma \}$ where
\begin{equation*}
\Wbar^2((y_1,y_2)|(x_1,x_2),(\bar{s},\sigma)) \\
= \Wbar(y_1|x_1,\bar{s}) \Wbar(y_2|x_2,\sigma(y_1)).
\end{equation*}
Recall that the transmitter strategy was $P_{X_1,X_2}(0,0)=P_{X_1,X_2}(1,1)$.
The adversary strategy is given by $P_{\bar{S},\sigma}$. Let $\mathcal{Q}$ (resp. $\bar{\mathcal{Q}}$) be the set of all possible (double-letter) distributions that can be induced on $\scriptY^{2}$ under $H_0$ (resp. $H_1$). Since $\mathcal{Q}$ is a singleton, let the member be denoted by $Q_{Y_1,Y_2}$.
If $\mathcal{Q} \cap \bar{\mathcal{Q}} = \emptyset$, then by Theorem \ref{thrm:bhlp1}, we get a positive exponent. Assume for contradiction that this is not the case, i.e. there exists $P_{\bar{S},\sigma}$  such that the resulting $\bar{Q}_{Y_1,Y_2}$ is same as $Q_{Y_1,Y_2}$. Since the marginals have to be equal, we have $Q_{Y_1}=\bar{Q}_{Y_1}$. This forces $P_{\bar{S}}$ to be uniform. Now, observe that $Q_{Y_1,Y_2}(0,1)=0$. Examine the term corresponding to $x_1=x_2=1,\bar{s}_1 = 1$  in the expansion of $\bar{Q}_{Y_1,Y_2}(0,1)$.
\begin{align*}
P_{X_1,X_2}(1,1)P_{\bar{S}}(1) \sum_{\sigma_2 \in \Sigma}P_{\sigma|\bar{S}}(\sigma_2|1) \Wbar(0|1,1) \Wbar(1|1,\sigma_{2}(0)) \\
\end{align*}
It cannot be zero since $\Wbar(0|1,1) > 0$, $\Wbar(1|1,\sigma_{2}(0)) > 0$ for all $\sigma_2$ when $0<r<1$. Thus, we have a contradiction. This scheme gets us a positive exponent even when the adversary is adaptive.

\newcommand{\bS}{\bar{S}}
\newcommand{\bs}{\bar{\vecs}}

\section{Proof of Theorem~\ref{thm:pvt_rand}}\label{app:pvt_rand}
We use bold faced letters to denote $n$-length vectors, for example, $\vecx$ denotes a vector in $\cX^n$ and $\vecX$ denotes a random vector taking values in $\cX^n$. 
For a random variable $X$, we denote its distribution by $P_X$ and use the notation $X\sim P_{X}$ to indicate this. 
For an alphabet $\cX$, let $\mathcal{P}^n_{\cX}$ denote the set of all empirical distributions of $n$ length strings from $\cX^n$. 
For a random variable $X\sim P_{X}$ such that  $P_X \in \mathcal{P}^n_{\cX}$, let $\cT^n_X$ be the set of all $n$-length strings with empirical distribution $P_X$. 
For $\vecx\in \cX^n$, the statement $\vecx\in \cT^n_{X}$ defines $P_{X}$ as the empirical distribution of $\vecx$ and a random variable $X\sim P_{X}$. For $P_{XY}\in \mathcal{P}^n_{\cX\times\cY}$ and $\vecy\in \cY^n$, let $\cT^n_{X|Y}(\vecy)= \inb{\vecx: (\vecx, \vecy)\in \cT^n_{XY}}$.
We denote $2^a$ by $\exp\inp{a}$.

\begin{definition}\label{defn:eta_p}
For a distribution $P$ over $\cX$, we define $\eta(P)$ as the set of triples $(\eta_1, \eta_2, \eta_3)$ for which there exists $\delta>0$ 
such that there is no joint distribution $P_{XX'\bar{S}SY}$ with $P_X=P_{X'} = P$ satisfying 
\begin{enumerate}
	\item $I(X;\bar{S})< \eta_1$,
	\item $I(X';S)< \delta$,
	\item $D(P_{X\bar{S}Y}||P_{X\bar{S}}W)<\eta_2$,
	\item $D(P_{X'SY}||P_{X'S}W)<\delta$, and
	\item if $P_{XX'}(X'\neq X)>0$, \begin{enumerate} \item[(i)] $I(X';XY|\bar{S})<\eta_3$, and \item[(ii)] $I(X;X'Y|S)<\delta$. \end{enumerate}
\end{enumerate}  
\end{definition}
We will first prove the following lemma.
\begin{lemma}\label{app_lemma:disambiguity}
If $\inp{\nullHyp,\altHyp}$ is not trans-symmetrizable and $\convHull(\nullHyp)\cap\convHull(\altHyp)=\emptyset$, then there exists an input distribution $P$ with $(\eta_1, \eta_2, \eta_3)\in \eta(P)$ such that $\eta_1, \eta_2, \eta_3>0$. In particular, for uniform distribution $P$ on the input alphabet $\cX$,  $\eta_1 = \eta_2 = \min\inb{\frac{\zeta_1^2}{5|\cX|^2}, \frac{\zeta_2^2}{11|\cX|^4}}$ and $\eta_3 = \frac{3\zeta_2^2}{11|\cX|^4}$.
\end{lemma}
\begin{proof}[Proof of Lemma~\ref{app_lemma:disambiguity}]
We show that if a pair of channels $\inp{\nullHyp,\altHyp}$ is  not trans-symmetrizable and  $\convHull(\nullHyp)\cap\convHull(\altHyp)=\emptyset$, then for any full support input distribution $P$, there exist $(\eta_1, \eta_2, \eta_3)\in \eta(P)$ such that $\eta_1, \eta_2, \eta_3>0$. 

Recall from \eqref{eq:gap_empty_intersection} and \eqref{eq:gap_to_sym} that if $\convHull(\nullHyp)\cap\convHull(\altHyp)=\emptyset$, there exists a constant $\zeta_1>0$ such that for every $P_{\bar{S}}$ on $\bar{\cW}$ and $P_{S}$ on $\cW$, 
 \begin{align*}
 &\max_{x, y}\inl{\sum_{\bar{s}}P_{\bar{S}}(\bar{s})\bar{W}(y|x, \bar{s})- \sum_{s}P_{S}(s)W(y|x, s)}>\zeta_1
\end{align*}
and if $\inp{\nullHyp,\altHyp}$ is not trans-symmetrizable, there exists $\zeta_2>0$ such that for every $P_{S|X}(s|x'),\, s\in \cS, x'\in \cX$ and $P_{\bar{S}|X}(\bar{s}|x), \bar{s}\in \bar{\cS}, x\in \cX$,
\begin{align*}
&\max_{x, x', y}\inl{\sum_{s \in \scriptS}P_{S|X}(s|x')W(y|x,s)-\sum_{\bar{s} \in \scriptSbar}P_{\bar{S}|X}(\bar{s}|x)W(y|x',\bar{s})}\nonumber\\
&\qquad>\zeta_2.
\end{align*}

We consider a full support input distribution $P$. That is, $\min_xP(x)\geq \alpha$ for some $\alpha>0$. We will show that there exists $(\eta_1, \eta_2, \eta_3)\in \eta(P)$ such that $\eta_1, \eta_2, \eta_3>0$ for some $\delta>0$. These choices only depend on $\alpha$, $\zeta_1$ and $\zeta_2$.

Suppose, for the sake of contradiction, for every $\eta_1,\, \eta_2>0$, $\eta_3>0$, there exists a $P_{XX'\bar{S}SY}$ such that for $(X,X')\sim P_{XX'}$,  $P_{XX'}\inp{X \neq X'}>0$ and conditions 1), 2), 3), 4) and 5) hold in Definition~\ref{defn:eta_p}.  We have, for $\bar{W}  = W_{Y|X\bar{S}}$,
 \begin{align*}
 \eta_1+\eta_2&+\eta_3>I(X;\bar{S}) + D(P_{X\bar{S}Y}||P_{X\bar{S}}\bar{W})+ I(X';XY|\bar{S})\\
 &= D(P_{XX'\bar{S}Y}||P_{X}P_{X'\bar{S}}W_{Y|X\bar{S}})\\
 &\geq D(P_{XX'Y}||\sum_{\bar{s}}P_{X}P_{X'}P_{\bar{S}|X'}(\bar{s}|\cdot)W_{Y|X\bar{S}}(\cdot|\cdot, \bar{s})).
 \end{align*}Using Pinsker's inequality, this implies that 
 \begin{align}\label{app_eq4}
 &\sum_{x, x', y}\Big|P_{XX'Y}(x, x', y)\nonumber\\
 &-\sum_{\bar{s}}P_{X}(x)P_{X'}(x')P_{\bar{S}|X'}(\bar{s}|x'){W_{Y|X\bar{S}}}(y|x, \bar{s})\Big|\nonumber\\&\leq \sqrt{2\ln{2}}\sqrt{\eta_1+\eta_2+\eta_3}.
 \end{align}
 Similarly, from conditions 2), 4) and 5) (ii) in Definition~\ref{defn:eta_p}, we can write
  \begin{align}\label{app_eq5}
 &\sum_{x, x', y}\Big|P_{XX'Y}(x, x', y) -\nonumber\\
 &-\sum_{s}P_{X}(x)P_{X'}(x')P_{S|X}(s|x)W_{Y|X' S}(y|x', s)\Big|\leq \sqrt{2\ln{2}}\sqrt{3\delta}.
 \end{align}Combining \eqref{app_eq4} and \eqref{app_eq5} and noting that $\ln{2}\leq 1$, we obtain
 \begin{align}
 &\sum_{x, x', y}P_{X}(x)P_{X'}(x')\Big|\sum_{\bar{s}}P_{\bar{S}|X'}(\bar{s}|x')W_{Y|X\bar{S}}(y|x, \bar{s})\nonumber-\\ &\sum_{s}P_{S|X}(s|x)W_{Y|X' S}(y|x', s)\Big| \leq \sqrt{2}\inp{\sqrt{\eta_1+\eta_2+\eta_3}+\sqrt{3\delta}}.
 \end{align}
 This implies that
 \begin{align}
 &\max_{x, x', y}\Big|\sum_{\bar{s}}P_{\bar{S}|X'}(\bar{s}|x')W_{Y|X\bar{S}}(y|x, \bar{s})\nonumber\\ &- \sum_{s}P_{S|X}(s|x)W_{Y|X' S}(y|x', s)\Big| \leq\frac{ \sqrt{2}\inp{\sqrt{\eta_1+\eta_2+\eta_3}+\sqrt{3\delta}}}{\alpha^2}	
 \end{align} which is a contradiction to \eqref{eq:gap_to_sym} for $\eta_1, \eta_2,$ and $\eta_3$ satisfying 
 \begin{align}\label{app_eq6}
 \frac{ \sqrt{2}\sqrt{\eta_1+\eta_2+\eta_3}+\sqrt{6\delta}}{\alpha^2}\leq \zeta_2
 \end{align}for some $\delta>0$.
Next, suppose that there exists $P_{XX'\bar{S}SY}$ such that for $(X,X')\sim P_{XX'}$,  $P_{XX'}\inp{X = X'} =1$ such that conditions 1), 2), 3) and 4) hold in Definition~\ref{defn:eta_p}. 
Setting $X' = X$ and proceeding in a similar manner, one can show that
\begin{align*}
 &\max_{x, y}\inl{\sum_{\bar{s}}P_{\bar{S}}(\bar{s})W_{Y|X\bar{S}}(y|x, \bar{s})- \sum_{s}P_{S}(s)W_{Y|XS}(y|x, s)}\nonumber\\&\leq \frac{\sqrt{2}\sqrt{\eta_1+\eta_2}+\sqrt{4\delta}}{\alpha}
 \end{align*} which is a contradiction to \eqref{eq:gap_empty_intersection} for 
 \begin{align}
 \frac{\sqrt{2}\sqrt{\eta_1+\eta_2}+\sqrt{4\delta}}{\alpha}\leq \zeta_1.\label{app_eq3}
 \end{align}

 Since, $\zeta_1$ and $\zeta_2$ are both positive, we can choose $\eta_1, \eta_2, \eta_3>0$ such that for some $\delta>0$, \eqref{app_eq3} and \eqref{app_eq6} hold. Note that such a choice only depends on $\alpha$, $\zeta_1$ and $\zeta_2$.

In particular, if $P$ is uniform, then $\alpha = \frac{1}{|\cX|}$. If we choose $\eta_1 = \eta_2 = \min\inb{\frac{\zeta_1^2}{5|\cX|^2}, \frac{\zeta_2^2}{11|\cX|^4}}$ and $\eta_3 = \frac{3\zeta_2^2}{11|\cX|^4}$. Then, \eqref{app_eq3} and \eqref{app_eq6} hold for some $\delta>0$.
\end{proof}

\begin{proof}[Proof of Theorem~\ref{thm:pvt_rand}]
We already discussed (after Definition~\ref{defn:trans-sym}) how trans-symmetrizability implies $\bpvtexp = 0$. Here, we provide the proof of the lower bound of \begin{align*}\min\inb{\eta_1, \eta_2,\eta_3/3}
\end{align*} on the exponent. This combined with Lemma~\ref{app_lemma:disambiguity} will give us the lower bound on the exponent in the theorem statement. 

The proof uses the method of types (See \cite{csiszar1998method,csiszar2011information}).
We recall some properties from \cite[Chapter~2]{csiszar2011information}. Let $X$ and $Y$ be two jointly distributed random variables according to a joint type $P_{XY}\in \cP_n(\cX\times \cY)$. For $(x^n, y^n)\in \cT^n_{XY}$, a distribution $Q$ on $\cX$ and a discrete memoryless channel $U$ from $\cX$ to $\cY$, we have
\begin{align}
|\cP_n(\cX)|&\leq (n+1)^{|\cX|}\label{eq:type1}\\
(n+1)^{-|\cX|}\exp\inp{nH(X)}&\leq \cT^n_{X}\leq \exp\inp{nH(X)}\label{eq:type2}\\
(n+1)^{-|\cX||\cY|}\exp\inp{nH(Y|X)}&\leq \cT^n_{Y|X}(\vecx)\leq \exp\inp{nH(Y|X)}\label{eq:type3}\\
(n+1)^{-|\cX|}\exp\inp{-nD(P_X||Q)}&\leq \sum_{\tilde{\vecx}\in \cT^n_{X}}Q^n(\tilde{\vecx}) \leq \exp\inp{-nD(P_X||Q)}\label{eq:type4}\\
\sum_{y\in \cT^n_{Y|X}(\vecx)}U^n(\vecy|\vecx) \leq \exp&\inp{-nD(P_{XY}||P_XU)}\label{eq:type5}.
\end{align}

For a distribution $P$, $(\eta_1, \eta_2, \eta_3)\in \eta(P)$ and $R=\eta_3/3$, we first show that we can obtain an exponent $\gamma$ (see \eqref{eq:exponent_intermediate} below) for the probability of error under Hypothesis $H_1$. For any $\epsilon>0$,
\begin{align}
\gamma &\geq \min\Bigg\{\min_{\substack{P_{X\bS}:\\I(X;\bS)\geq \eta_1}}A_1,\eta_2-\epsilon, \min_{\substack{P_{X\bar{S}X'SY}:\\I(X';XY|\bar{S})\geq \eta_3}}A_2\Bigg\}\label{eq:exponent_intermediate}
\end{align}
\begin{align}
\text{ where }&A_1 =
R-\inps{R-I(X;\bar{S})}-\epsilon\qquad\text{ and }\label{eq:exponent_A1}\\
A_2 &= \max\Big\{I(X;X'\bar{S})-\inps{R-I(X';\bar{S})}-\epsilon,I(Y;X'|X\bar{S})-\inps{R-{I(X';X\bar{S})}}-2\epsilon\Big\}\label{eq:exponent_A3}
\end{align}
For $N = \exp\inp{nR}$, let $\cC(P) = \inb{\vecx_1,  \ldots, \vecx_N}$ be a set of sequences of type $P$ given by Lemma~\ref{app_lem:codebook} (proved on page~\pageref{eq:proof_lem_codebook}). The lemma is based on \cite[Lemma 3]{csiszar1988capacity}.
\begin{lemma}\label{app_lem:codebook}
For any $\epsilon>0$, large enough $n$, $N := 2^{nR}$ for $R\geq \epsilon$, and type $P$, there exist sequences $\vecx_1, \ldots, \vecx_N\in \cX^n$ of type $P$, such that for every $\vecx\in \cX^n$, $\vecs\in \cS^n\cup\bar{\cS}^n$ and every joint type $P_{XX'S}$, we have 
\begin{align}
&\inl{\inb{j:(\vecx, \vecx_j, \vecs)\in {\cT^n_{XX'S}}}}\leq \exp\inb{n\inp{\inps{R-I(X';XS)}+\epsilon}},\label{eq:codebook1}
\\
&\frac{1}{N}\inl{\inb{i:\inp{\vecx_i, \vecs}\in \cT^n_{XS}}}\leq \exp\inb{n\inp{\inps{R-I(X;S)}-R+\epsilon/2}}\label{eq:codebook2},\text{ and}\\
&\frac{1}{N}\inl{\inb{i:(\vecx_i, \vecx_j, \vecs)\in \cT^n_{XX'S}\text{ for some }j\neq i}}\leq \exp\inb{n\inp{\inps{R-I(X';S)}-I(X;X'S)+\epsilon/2}}\label{eq:codebook3}
\end{align}
\end{lemma}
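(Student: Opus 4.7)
The plan is to construct the codebook by the probabilistic method: draw $\vecX_1, \ldots, \vecX_N$ independently and uniformly from the type class $\cT^n_X$ associated with $P$, and show that with positive probability the resulting random codebook simultaneously satisfies \eqref{eq:codebook1}, \eqref{eq:codebook2}, and \eqref{eq:codebook3} for every $\vecx$, $\vecs$, and every joint type $P_{XX'S}$.

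For \eqref{eq:codebook1}, fix $\vecx$, $\vecs$, and a joint type $P_{XX'S}$ whose $X'$-marginal is $P$ (otherwise the indicator in question is identically zero). The random variables $\mathbb{1}\inb{(\vecx, \vecX_j, \vecs) \in \cT^n_{XX'S}}$ are i.i.d.\ Bernoulli across $j$ with common mean $p$ bounded by $(n+1)^{|\cX|}\exp(-n I(X';XS))$, using the standard type-class size estimates $|\cT^n_{X'|XS}(\vecx,\vecs)| \leq \exp(nH(X'|XS))$ and $|\cT^n_X| \geq (n+1)^{-|\cX|}\exp(nH(X'))$. I would split into two cases. If $R \geq I(X';XS)$, then $Np$ is exponential with rate at most $R - I(X';XS)$, and a standard Chernoff bound gives that the count $M$ exceeds $\exp(n(R - I(X';XS)+\epsilon))$ with doubly-exponentially small probability. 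If $R < I(X';XS)$, then $Np$ is exponentially small, and the elementary estimate $\Pr\inb{M \geq k} \leq \binom{N}{k} p^k \leq (eNp/k)^k$ taken at $k = \lceil \exp(n\epsilon) \rceil$ again yields doubly-exponential smallness since $eNp/k$ is itself exponentially small.

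The analysis of \eqref{eq:codebook2} is essentially the same, applied to the Bernoulli sum $|\{i: (\vecX_i, \vecs) \in \cT^n_{XS}\}|$ whose mean is bounded by $N(n+1)^{|\cX|}\exp(-n I(X;S))$. For \eqref{eq:codebook3} I would proceed in two stages: conditional on $\vecX_i$, property \eqref{eq:codebook1} applied with $\vecx = \vecX_i$ already bounds the number of $j \neq i$ with $(\vecX_i, \vecX_j, \vecs) \in \cT^n_{XX'S}$ by $\exp(n(\inps{R-I(X';XS)}+\epsilon))$; the remaining task is to bound the number of indices $i$ for which such a $j$ exists at all, which reduces to a Chernoff argument analogous to \eqref{eq:codebook2}, the exponent being shifted by $I(X;X'S)$ because the joint type constraint forces $\vecX_i$ to occupy an exponentially small fraction of its conditional type class.

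The proof is completed by a union bound over the at most polynomial (in $n$) number of joint types $P_{XX'S}$ and over the at most $|\cX|^n(|\cS|+|\bar{\cS}|)^n$ choices of $(\vecx,\vecs)$. Since each individual failure event has doubly-exponentially small probability, the total failure probability stays strictly below $1$ for $n$ sufficiently large, and the probabilistic method produces the desired codebook. The main technical obstacle is arranging for the Chernoff-type tail to dominate the singly-exponential union-bound cost over $(\vecx,\vecs)$; this is precisely why every bound in the lemma carries an $\epsilon$ slack and why the two-case split on the sign of $R - I(\cdot\,;\cdot)$ (reflected in the $|\cdot|^+$ structure) appears unavoidable.
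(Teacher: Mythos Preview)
Your approach is the same one the paper invokes: the paper does not give a self-contained argument but simply cites Csisz\'ar's codebook lemma \cite[Lemma~3]{csiszar1988capacity} and points to his inequalities (3.1)--(3.3) and (A2), (A8), which are proved by exactly the random-codebook-plus-large-deviations method you outline. Your treatment of \eqref{eq:codebook1} and \eqref{eq:codebook2} is fine as written.

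One caveat on \eqref{eq:codebook3}: the indicators $B_i = \mathbb{1}\{\exists\, j \neq i : (\vecX_i,\vecX_j,\vecs)\in\cT^n_{XX'S}\}$ are \emph{not} independent across $i$ (they share the same pool of potential witnesses $\vecX_j$), so the phrase ``reduces to a Chernoff argument analogous to \eqref{eq:codebook2}'' hides a real step. Csisz\'ar deals with this by a sequential construction: he bounds the \emph{conditional} probability $\Pr[B_i=1\mid \vecX_1,\ldots,\vecX_{i-1}]$ uniformly by a quantity $a/N$ and then applies a martingale-type large-deviation estimate (his eq.~(A2), which is precisely what the paper points to, with the specific choice of threshold $t$ recorded there). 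Your two-stage decomposition can be completed along the same lines, but the dependence must be handled explicitly rather than by analogy with the i.i.d.\ case.
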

The transmitter sends an input sequence selected uniformly at random (using its private randomness) from $\vecx_1, \vecx_2, \ldots, \vecx_N$. 
\begin{definition}[Detector]\label{defn:detector}
Given sequences $\inb{\vecx_1, \ldots, \vecx}$, each of type $P$, and for $(\eta_1, \eta_2,\eta_3)\in \eta(P)$ and $\delta>0$ given by Definition~\ref{defn:eta_p}, $\phi(\vecy) = H_1$ if and only if there exist $i\in [1:N]$ and  ${\bar{\vecs}}\in \bar{\cS}^n$  such that for the joint empirical distribution $P_{X\bar{S}Y}$ of  $(\vecx_i, {\bar{\vecs}}, \vecy)$, 
\begin{enumerate}
    \item $I(X;\bar{S})< \eta_1$
    \item $D(P_{X\bar{S}Y}||P_{X\bar{S}}\bar{W})<\eta_2$, and 
    \item for each $j$ such that the joint empirical distribution $P_{X\bar{S}X'SY}$ of $(\vecx_i, {\bar{\vecs}},\vecx_j, \vecs, \vecy)$  for some $\vecs\in {\cS}^n$ satisfies $I(X';S)< \delta$ and $D(P_{X'SY}||P_{X'S}W)<\delta$, we have $I(X';XY|\bar{S})<\eta_3$.
\end{enumerate}
\end{definition}

Suppose the active hypothesis is $H_1$. Firstly, notice that the probability of error under any randomized attack can be written as an average over deterministic attacks and is thus maximized by a deterministic attack. So, it is sufficient to consider only deterministic attacks by the adversary. Suppose the adversary attack sequence is $\bar{\vecs}\in \bar{\cS}^n$.

Let $P_{\vecX\vecY}(\vecx_i, \vecy) = \frac{1}{N}W^n(\vecy|\vecx_i, \bs)$ for $\vecx_i\in \cC(P)$, $\vecy\in \cY^n$ and $P_{\vecX\vecY}(\vecx, \vecy) = 0$ for $\vecx\notin\cC(P)$. Let $(\vecX, \vecY)\sim P_{\vecX\vecY}$. 
Define events
\begin{align*}
\cE_1 &:= \inb{(\vecX, \bs)\in \cT^n_{X\bS} \text{ such that } I(X;\bS)\geq \eta_1},\\
\cE_2 &:= \inb{(\vecX, \bs, \vecY)\in \cT^n_{X\bS Y} \text{ such that }  D(P_{X\bS Y}||P_{X\bS}\times \bar{W})\geq \eta_2},
\end{align*}  
$\cE_3 := \big\{(\vecX, \bs, \vecY)\in \cT^n_{X\bS Y} \text{ such that }  I(X;\bS)< \eta_1,\\ D(P_{X\bS Y}||P_{X\bS}\times \bar{W})< \eta_2, \exists \vecx_j\neq \vecX \text{ such that }(\vecx_j, \vecs, \vecY)\in \cT^n_{X'SY} \text{ for some }\vecs\in {\cS}^n  \text{ for which }I(X';S)< \delta \\\text{ and }D(P_{X'SY}||P_{X'S}W)<\delta, \text{ but } I(X';XY|\bar{S})\geq \eta_3\big\}$, and
$\cE_4 := \big\{\exists\vecs\in {\cS}^n \text{ such that }(\vecX, \bs, \vecs, \vecY)\in \cT^n_{X\bS S Y} \\ \text{ for which }  I(X;\bS)< \eta_1,D(P_{X\bS Y}||P_{X\bS}\times \bar{W})< \eta_2, I(X;S)< \delta \text{ and }D(P_{XSY}||P_{XS}W)<\delta\big\}$. \\
Then, 
\begin{align*}
P_{\vecX \vecY}&\inp{\phi(\vecY) = H_0}\leq P_{\vecX \vecY}\inp{\cE_1\cup\cE_2\cup\cE_3\cup\cE_4}\\
&\leq P_{\vecX \vecY}\inp{\cE_1} + P_{\vecX \vecY}\inp{\cE_2} + P_{\vecX \vecY}\inp{\cE_3}+ P_{\vecX \vecY}\inp{\cE_4}
\end{align*}
We first note that $P_{\vecX \vecY}\inp{\cE_4} = 0$ because for $(\eta_1, \eta_2, \eta_3)\in \eta(P)$ and $\delta>0$ given by Definition~\ref{defn:eta_p}, there is no distribution $\cT^n_{XX'\bS S Y}$ (with $X' = X$) satisfying the conditions in $\cE_4$. Next, we  
evaluate $P_{\vecX \vecY}\inp{\cE_1}$, 
\begin{align}
&\bbP_{\vecX \vecY}\inp{(\vecX, \bar{\vecs})\in \cT^n_{X\bar{S}}, I(X;\bar{S})\geq \eta_1}\nonumber\\
&  =  \frac{\inl{i:(\vecx_i, \bar{\vecs})\in \cT^n_{X\bar{S}}, I(X;\bar{S})\geq \eta_1}}{N}\nonumber\\
& = \sum_{P_{X\bar{S}}\in \cP^n_{\cX\times\bar{\cS}}:I(X;\bS)\geq \eta_1}\frac{\inl{i:(\vecx_i, \bar{\vecs})\in \cT^n_{X\bar{S}}}}{N}\nonumber\\
&\stackrel{(a)}{\leq} \sum_{P_{X\bar{S}}:I(X;\bS)\geq \eta_1}\exp\inb{n\inp{\inps{R-I(X;\bS)}-R+\epsilon/2}}\nonumber\\
&\stackrel{(b)}{\leq} \max_{P_{X\bar{S}}:I(X;\bS)\geq \eta_1}\exp\inb{-n\inp{R-\inps{R-I(X;\bS)}-\epsilon}}\label{app_eq:exponent1}
\end{align} Here, $(a)$ holds because of \eqref{eq:codebook2} and $(b)$ holds for large $n$ as the number of joint types is at most polynomial in $n$ (see \eqref{eq:type1}). 
Next, we evaluate $P_{\vecX \vecY}\inp{\cE_2}$, 
\begin{align}
&P_{\vecX \vecY}\inp{\inb{(\vecX, \bs, \vecY)\in \cT^n_{X\bS Y}, D(P_{X\bS Y}||P_{X\bS}\times \bar{W})\geq \eta_2}}\nonumber\\
&=P_{\vecX \vecY}\inp{\cup_{\substack{P_{X\bS Y}\in \cP^n_{\cX\times\bar{\cS}\times\cY}:\\D(P_{X\bS Y}||P_{X\bS}\times \bar{W})\geq \eta_2}}\inb{(\vecX, \bs, \vecY)\in \cT^n_{X\bS Y} }}\nonumber\\
&=\sum_{\substack{P_{X\bS Y}\in \cP^n_{\cX\times\bar{\cS}\times\cY}:\\D(P_{X\bS Y}||P_{X\bS}\times \bar{W})\geq \eta_2}}P_{\vecX \vecY}\inp{(\vecX, \bs, \vecY)\in \cT^n_{X\bS Y} }.\nonumber
\end{align}
For any $P_{X\bS Y}\in \cP^n_{\cX\times\bar{\cS}\times\cY}$ such that $D(P_{X\bS Y}||P_{X\bS}\times \bar{W})\geq \eta_2$, we have
\begin{align}
&P_{\vecX \vecY}\inp{\inb{(\vecX, \bs, \vecY)\in \cT^n_{X\bS Y} }}\nonumber\\ 
&=\frac{1}{N}\sum_{\vecx_i\in \cT^n_{X|\bS}(\bs)}\sum_{\vecy\in \cT^n_{Y|X\bS}(\vecx_i, \bs)}W^n(\vecy|\vecx_i, \bs)\nonumber \\
&\stackrel{(a)}{\leq} \frac{1}{N}\sum_{\vecx_i\in \cT^n_{X|\bS}(\bs)}\exp\inb{-nD(P_{X\bS Y}||P_{X\bS}\times \bar{W})}\nonumber\\
&\leq \exp\inp{-n\eta_2}\nonumber
\end{align} where $(a)$ follows {from} \eqref{eq:type5}.
Thus, by \eqref{eq:type1},
\begin{align}
&P_{\vecX \vecY}\inp{\cE_2}\leq \sum_{\substack{P_{X\bS Y}\in \cP^n_{\cX\times\bar{\cS}\times\cY}:\\D(P_{X\bS Y}||P_{X\bS}\times \bar{W})\geq \eta_2}}\exp\inp{-n\eta_2}\nonumber\\
&\leq \exp\inp{-n\inp{\eta_2-\epsilon}} \text{ for large $n$ and $\epsilon>0$}.\label{app_eq:exponent2}
\end{align}
In order to evaluate the probability of $\cE_3$, let $\cP\subseteq \cP^n_{\cX\times \bar{\cS}\times \cY\times \cX}$ be such that for each $P_{X\bS YX'}\in \cP$ we have  $I(X;\bS)< \eta_1, D(P_{X\bS Y}||P_{X\bS}\times \bar{W})< \eta_2$,  $I(X';XY|\bar{S})\geq \eta_3$ and for some $S$ distributed over $\cS$, $I(X';S)< \delta, D(P_{X'SY}||P_{X'S}W)<\delta$ .
\begin{align}
&P_{\vecX \vecY}\inp{\cE_3}
\leq \sum_{P_{X\bS YX'}\in \cP}\frac{1}{N}\sum_{i: (\vecx_i, \vecx_j, \bs)\in \cT^n_{XX'\bS}\text{ for some }j\neq i} \sum_{\vecy\in \cT^n_{Y|X'X\bS}(\vecx_j, \vecx_i, \bs)}W^n(\vecy|\vecx_i, \bs)\label{eq_app:condition3_part}\\
&\leq \sum_{P_{X\bS YX'}\in \cP}\frac{1}{N}\inl{\inb{i: (\vecx_i, \vecx_j, \bs)\in \cT^n_{XX'\bS}\text{ for some }j\neq i}}\nonumber\\
&\stackrel{(a)}{\leq} \sum_{P_{X\bS YX'}\in \cP}\exp\inb{n\inp{\inps{R-I(X';\bS)}-I(X;X'\bS)+\epsilon/2}}\nonumber\\
&\stackrel{(b)}{\leq} \exp\inb{-n\inp{I(X;X'\bS)-\inps{R-I(X';\bS)}-\epsilon}}\label{app_eq:exponent31}
\end{align}
where $(a)$ follows from \eqref{eq:codebook3} and $(b)$ holds for large $n$. \eqref{eq_app:condition3_part} is also upper bounded by 
\begin{align}
&\sum_{P_{X\bS YX'}\in \cP}\frac{1}{N}\sum_{\vecx_i: \vecx_i\in \cT^n_{X|\bS}(\bs)}\sum_{\vecx_j\in \cT^n_{X'|X\bS}(\vecx_i, \bs)}\sum_{\vecy\in \cT^n_{Y|X'X\bS}(\vecx_j, \vecx_i, \bs)}W^n(\vecy|\vecx_i, \bs)\nonumber\\
&\stackrel{(a)}{\leq} \sum_{P_{X\bS YX'}\in \cP}\frac{1}{N}\sum_{\substack{\vecx_i:\\ \vecx_i\in \cT^n_{X|\bS}(\bs)}}\exp\inb{n\inp{\inps{R-I(X';X\bS)}+\epsilon}}(n+1)^{|\cY||\cX||\mathcal{\bS}|}\exp\inb{-nI(X';Y|X\bS)}\nonumber\\
&\stackrel{(b)}\leq \exp\inb{-n\inp{I(X';Y|X\bS)-\inps{R-I(X';X\bS)}-2\epsilon}}\label{app_eq:exponent32}
\end{align}
where $(a)$ follows from \eqref{eq:codebook1} and by noting that $\sum_{\vecy\in \cT^n_{Y|X'X\bS}(\vecx_j, \vecx_i, \bs)}W^n(\vecy|\vecx_i, \bs)\leq (n+1)^{|\cY||\cX||\mathcal{\bS}|}\exp\inp{-nI(X';Y|X\bS)}$. This is because $W^n(\vecy|\vecx_i, \bs)$ is the same for every $\vecy\in \cT^n_{Y|X\bS}(\vecx_i, \bs)$ and hence is upper bounded by $1/|\cT^n_{Y|X\bS}(\vecx_i, \bs)|\leq (n+1)^{|\cY||\cX||\mathcal{\bS}|}\exp\inp{-nH(Y|X\bS)}$ and $(b)$ holds for large $n$. The exponent in \eqref{eq:exponent_intermediate} follows from \eqref{app_eq:exponent1}, \eqref{app_eq:exponent2}, \eqref{app_eq:exponent31} and \eqref{app_eq:exponent32}.

Next, we show the exponent in Theorem~\ref{thm:pvt_rand}.

For $R\geq I(X;\bS)$, $A_1 = I(X;\bS)-\epsilon\geq \eta_1-\epsilon$. When $R<I(X;\bS)$, $A_1 = R-\epsilon$.   Next, we evaluate $A_2$. When $I(X;X'\bS)-\inps{R-I(X';\bS)}-\epsilon \geq t$ for some $t$ (TBD), $A_2\geq t$. Otherwise, when $I(X;X'\bS)-\inps{R-I(X';\bS)}\leq \epsilon + t$, we consider two cases. When $R\leq I(X';\bS)$, we have $I(X;X'|\bS)\leq I(X;X'\bS)\leq \epsilon + t$. Thus,  
\begin{align*}
&I(Y;X'|X\bS)-\inps{R-I(X';X\bS)}-2\epsilon\\
&=I(Y;X'|X\bS)-2\epsilon\\
& = I(YX;X'|\bS)-I(X;X'|\bS)-2\epsilon\\
&\geq \eta_3-t-3\epsilon \text{ because }I(YX;X'|\bS)>\eta_3.
\end{align*}
Thus, $A_2\geq \eta_3-t-3\epsilon$ in this case. When $R>I(X';S)$, 
\begin{align*}
R&\geq I(X;X'\bS)+I(X';\bS)-\epsilon - t\\
& \geq  I(X';X\bS) -\epsilon-t.
\end{align*}This implies that  $\inps{R-I(X';X\bS)}\leq R-I(X';X\bS)+\epsilon+t$. In this case,
\begin{align*}
&I(Y;X'|X\bS)-\inps{R-I(X';X\bS)}-2\epsilon\\
&\geq I(Y;X'|X\bS) -R+I(X';X\bS)-\epsilon-t-2\epsilon\\
&= I(X\bS Y;X')-R-t-3\epsilon\\
& = I(XY;X'|\bS)+I(X';\bS)- R - t - 3\epsilon\\
&\geq \eta_3 - R - t - 3\epsilon.
\end{align*}

With this, the exponent $\gamma$ \begin{align*}
\gamma &\geq \min\big\{\min\inb{\eta_1-\epsilon, R-\epsilon}, \eta_2-\epsilon,\\& \qquad \qquad  \max\inb{t, \min\inb{\eta_3 - t - \epsilon/4, \eta_3-R-t-3\epsilon}}\big\}.
\end{align*}
For $R = t = \eta_3/3$ and $\epsilon \rightarrow 0$ (note that $\epsilon>0$ may be arbitrarily small as long as $R\geq \epsilon$ as required by Lemma~\ref{app_lem:codebook}),  the exponent $\gamma$ can me made arbitrarily close to 
\begin{align}\min\inb{\eta_1, \eta_2,\eta_3/3}.\label{eq:final_exponent}
\end{align}
Next, we will show under Hypothesis $H_0$ too, the probability of error is arbitrarily small. Suppose the adversary's attack is $\vecs\in \cS^n$.
For each $\vecx_j\in \cC(P)$ and $\vecy\in \cY^n$, let $P_{\vecX'\vecY}(\vecx_j, \vecy) = \frac{1}{N}W^n(\vecy|\vecx_j, \vecs)$. Let $(\vecX', \vecY)\sim P_{\vecX'\vecY}$. Define $\tilde{\cE}_1 := \inb{(\vecX', \vecs)\in \cT^n_{X'S} \text{ such that } I(X';S)\geq \delta}$, $\tilde{\cE}_2 := \inb{(\vecX', \vecs, \vecY)\in \cT^n_{X'S Y} \text{ such that }  D(P_{X'S Y}||P_{X'S}\times {W})\geq \delta}$,  
$\tilde{\cE}_3 := \big\{(\vecX', \vecs, \vecY)\in \cT^n_{X'S Y} \text{ such that }  I(X';S)< \delta,\\ D(P_{X'S Y}||P_{X'S}\times {W})< \delta, \exists \vecx_i\neq \vecX' \text{ such that }(\vecx_i, \bs, \vecY)\in \cT^n_{X\bS Y} \text{ for some }\bs\in {\bar{\cS}}^n \text{ for which }I(X;\bS)< \eta_1 \\\text{ and }D(P_{X\bS Y}||P_{X\bS}\bar{W})<\eta_2, \text{ but } I(X;X'Y|{S})\geq \delta\big\}$, and 
$\tilde{\cE}_4 := \big\{\exists\bs\in {\bar{\cS}}^n \text{ such that }(\vecX', \bs, \vecs, \vecY)\in \cT^n_{X'\bS S Y} \\ \text{ for which }  I(X;\bS)< \eta_1,D(P_{X\bS Y}||P_{X\bS}\times \bar{W})< \eta_2, I(X';S)< \delta \text{ and }D(P_{X'SY}||P_{X'S}W)<\delta\big\}$.

These events are analogous to the events $\cE_1, \cE_2$, $\cE_3$ and $\cE_4$ defined under $H_1$, except that $(\eta_1, \eta_2, \eta_3)$ is exchanged with $(\delta, \delta, \delta)$. Following a similar line of argument, one can show that $P_{\vecX' \vecY}\inp{\tilde{\cE}_1\cup\tilde{\cE}_2\cup\tilde{\cE}_3\cup\tilde{\cE}_4}\leq \exp\inp{-n\delta/3}$ (see \eqref{eq:final_exponent}).

We will next argue that conditioned on the event $\tilde{\cE}_1^c\cap\tilde{\cE}_2^c\cap\tilde{\cE}_3^c\cap\tilde{\cE}^c_4$, the detector will not output $H_1$. 
This is because 
Definition~\ref{defn:eta_p} ensures that for $(\eta_1, \eta_2, \eta_3)\in \eta(P)$ and $\delta$ given by definition~\ref{defn:eta_p}, 
\begin{itemize}
	\item There does not exist $\vecx_i$, $\bar{\vecs}\in \bar{\cS}^n$ and  such that for $(\vecx_i, \vecX', \bar{\vecs}, \vecs, \vecY)\in \cT^n_{XX'\bar{S}SY}$, $I(X;\bar{S})< \eta_1$,  $D(P_{X\bar{S}Y}||P_{X\bar{S}}\bar{W})<\eta_2$, $I(X';S)< \delta$, $D(P_{X'SY}||P_{XS}W)<\delta$, and for $X\neq X'$,  $I(X';XY|\bar{S})<\eta_3$ and  $I(X;X'Y|S)<\delta$.
\end{itemize}
This implies that the error will happen only under $\tilde{\cE}_1\cup\tilde{\cE}_2\cup\tilde{\cE}_3\cup\tilde{\cE}_4$ which happens with probability at most $\exp\inp{-n\delta/3}$. This can be made arbitrarily small for large $n$.
\end{proof}


\begin{proof}[Proof of Lemma~\ref{app_lem:codebook}]\label{eq:proof_lem_codebook}
The proof of the lemma follows from the proof of \cite[Lemma 3]{csiszar1988capacity}. \eqref{eq:codebook1} is the same as \cite[eq. (3.1)]{csiszar1988capacity}. \eqref{eq:codebook2} can be obtained from the proof of \cite[eq. (3.2)]{csiszar1988capacity}, specifically by replacing $P_{X'S}$ with $P_{XS}$ and $\epsilon$ with $\epsilon/2$ in \cite[eq. (A8)]{csiszar1988capacity}. Equation \eqref{eq:codebook3} is obtained from the proof of \cite[eq. (3.3)]{csiszar1988capacity}, where for $a = (n+1)^{|\cX|}\exp\inb{n\inp{\inps{R-I(X';S)}-I(X;X'S)+\epsilon/4}}$, we choose $t = \exp\inb{n\inp{\inps{R-I(X';S)}-I(X;X'S)+\epsilon/2}}$. Note that for large enough $n$, $t> a\log{e}$ as required by \cite[eq. (A2)]{csiszar1988capacity}. 
\end{proof}

\newpage
\bibliographystyle{ieeetr}
\bibliography{refs}

\end{document}